\theoremstyle{plain}
\newtheorem{theorem}{Theorem}[section]
\theoremstyle{definition}
\newtheorem{remark}{Remark}[section]
\newcommand{\as}{\\[.6em]}
\newcommand{\AS}{\\[1.2em]}
\newcommand{\dis}{\displaystyle}
\newcommand{\bela}[1]{\begin{equation}\label{#1}}
\newcommand{\ela}{\end{equation}}
\newcommand{\bear}[1]{\begin{array}{#1}}
\newcommand{\ear}{\end{array}}
\newcommand{\del}{\partial}
\newcommand{\pf}{\operatorname{pf}}
\newcommand{\rank}{\operatorname{rank}}
\newcommand{\di}{\operatorname{div}}
\newcommand{\fb}{\mbox{\boldmath$f$}}
\newcommand{\high}[1]{#1}
\newcommand{\highb}[1]{#1}
\renewcommand{\Xi}{\mathsf{H}}
\newcommand{\vol}{\mathrm{vol}_4}
\numberwithin{equation}{section}
\title{\high{Self-dual Einstein spaces and the general heavenly equation. Eigenfunctions as coordinates}}
\author{B.G.\ Konopelchenko$^{1}$, W.K.\ Schief$^{2}$ and A.\ Szereszewski$^{3}$
\bigskip\\  
$^{1}$Department of Mathematics and Physics,\\ University of Salento and 
INFN, Sezione di Lecce,\\ Lecce, 73100, Italy
\bigskip\\
$^{2}$School of Mathematics and Statistics,\\ The University of New South Wales,\\ Sydney, NSW 2052, Australia
\bigskip\\
$^{3}$Institute of Theoretical Physics,\\ Faculty of Physics, University of Warsaw, \high{Pasteura~5, 02-093 Warsaw}, Poland}
\begin{document}

\maketitle

\begin{abstract}
Eigenfunctions are shown to constitute privileged coordinates of self-dual Einstein spaces with the underlying governing equation being revealed as the general heavenly equation. The formalism developed here may be used to link algorithmically a variety of known heavenly equations. In particular, the classical connection between Pleba\'nski's first and second heavenly equations is retrieved and interpreted in terms of eigenfunctions. In addition, connections with travelling wave reductions of the recently introduced TED equation which constitutes a 4+4-dimensional integrable generalisation of the general heavenly equation are found. These are obtained by means of (partial) Legendre transformations. As a particular application, we prove that a large class of self-dual Einstein spaces governed by a compatible system of dispersionless Hirota equations is genuinely four-dimensional in that the (generic) metrics do not admit any (proper or non-proper) conformal Killing vectors. This generalises the known link between a particular class of self-dual Einstein spaces and the dispersionless Hirota equation encoding three-dimensional Einstein-Weyl geometries. 
\end{abstract}

\section{Introduction}

A variety of important solution generation techniques applied to Einstein's field equations have their origin in (or may be interpreted in terms of) integrable systems theory (see \cite{SKMHH2003, RogersSchief2002} and references therein). Eigenfunctions play a key role in integrable systems theory since they encode integrable systems via the compatibility of the linear systems (Lax pairs) which they satisfy. Eigenfunctions, in turn, obey partial differential equations (eigenfunction equations) which are intimately related to the important Miura-type transformations. The latter provide a link between integrable hierarchies and their associated modified versions such as the Korteweg-de Vries (KdV) and modified KdV (mKdV) hierarchies. Eigenfunctions are the key ingredient in Darboux-type transformations which lie at the heart of solution generation techniques (B\"acklund transformations) and their associated superposition principles (permutability theorems) for solutions of integrable systems. Eigenfunctions also encode conserved quantities associated with conservation laws hidden in integrable systems. Details on the above subjects and corresponding references may be found in, e.g., \cite{RogersSchief2002,AblowitzSegur1981,AblowitzClarkson1991,Fordy1990,Konopelchenko1990}. 

Hierarchies of 1+1-dimensional modified integrable equations are known to admit integrable counterparts which are related by reciprocal transformations. A geometrically important example is the integrable nonlinear Schr\"odinger (NLS) equation, the modified version of which is given by the Heisenberg spin equation which, in turn, gives rise to the reciprocally related loop soliton equation (see, e.g., \cite{RogersSchief2002} and references therein). In the context of integrable systems, the analogue of reciprocal transformations for 2+1-dimensional hierarchies has been shown to involve eigenfunctions which, importantly, constitute some of the new independent variables in the ``reciprocally'' related hierarchies. A prime example of the link between the associated three types of hierarchies of 2+1-dimensional integrable equations is provided by the connection between the Kadomtsev-Petviashvili (KP) hierarchy, its modified (mKP) hierarchy and the reciprocally related 2+1-dimensional Dym hierarchy. In this connection, the reader may wish to consult \cite{OevelRogers1993} for details and references.

In this paper, we present a self-contained first application to general relativity of a general scheme to be discussed elsewhere which, in any dimension, employs eigenfunctions of dispersionless integrable systems as independent variables in the corresponding privileged equivalent systems. Even though dispersionless integrable systems have been studied extensively (see, e.g., \cite{BogdanovKonopelchenko2013, ManakovSantini2006} and references therein) since the pioneering work of Zakharov and Shabat \cite{ZakharovShabat1979}, the interpretation of eigenfunctions as independent variables has not been at the forefront of these investigations. Here, we demonstrate that any four eigenfunctions of the self-dual Einstein equations \cite{Boyer1983} corresponding to four distinct spectral parameters give rise to a unique representation of self-dual Einstein spaces in terms of a potential which depends on the eigenfunctions playing the role of coordinates. Remarkably, the associated underlying dispersionless equation turns out to be the general heavenly equation introduced in \cite{Schief1996, Schief1999} as proven in Section 4. It is noted in passing that particular coefficients of the Laurent expansions of eigenfunctions associated with the self-dual Einstein equations have been identified in \cite{Schief1996,Takasaki1989} as the independent variables in Pleba\'nski's important first and second heavenly equations~\cite{Plebanski1975} governing self-dual Einstein spaces. An invariant definition of the above-mentioned key potential is presented in Section 5.

It turns out that the assumption of one or two pairs of coinciding spectral parameters is also admissible. This is discussed in Sections 7 and 8 respectively. In this manner, on the one hand, we retrieve the connection between the first heavenly equation and the Husain-Park equation \cite{JakimowiczTafel2006} and, on the other hand, we demonstrate that the classical connection between Pleba\'nski's two heavenly equations \cite{Plebanski1975} may be viewed as a particular application of the algorithm developed in this paper. In Section 12, the classification in terms of (0, 1 or 2) pair(s) of coinciding spectral parameters is then shown to go hand in hand with travelling wave reductions of the recently introduced TED equation which constitutes a 4+4-dimensional integrable extension of the general heavenly equation \cite{KonopelchenkoSchief2019}. The connection between these two {\em a priori} unrelated subjects is provided by (partial) Legendre transformations. In particular, in Section 6, the latter is shown to leave invariant the general heavenly equation. \high{The investigation of Legendre transformations is motivated by the important observation that the gradient of the potential satisfying the general heavenly equation is, in fact, composed of another four eigenfunctions associated with the same spectral parameters.}

As a further application of the algorithm presented here, we show that the general heavenly equation may be specialised to a system of four compatible dispersionless Hirota equations \cite{Krynski2018} by matching the eigenfunction and scaling symmetries of the general heavenly equation \cite{KonopelchenkoSchief2019, Sergyeyev2017}. It turns out that the self-dual Einstein metrics obtained in this manner generalise those associated with the three-dimensional Einstein-Weyl geometries known to be governed by the dispersionless Hirota equation \cite{DunajskiKrynski2014} as discussed in Section 9. In general, in Section 10, the metrics generated by generic solutions of the dispersionless Hirota system are proven to be genuinely four-dimensional in the sense that no conformal Killing vectors (including homothetic Killing vectors and Killing vectors) exist. The key to the proof of this property is the derivation of the action of the above-mentioned eigenfunction symmetry constraint on the first heavenly equation, leading to a decomposition into three compatible differential equations. This is achieved in Section~11 by exploiting the connection between the general heavenly equation and Pleba\'nski's first heavenly equation derived in Section 8.

\section{The equations governing self-dual Einstein spaces}

Self-dual Einstein spaces constitute four-dimensional manifolds \high{equipped with a metric} which are characterised by a self-dual Riemann tensor. Self-duality implies that the Ricci tensor $R_{ik}$ vanishes and, hence, Einstein's vacuum equations $R_{ik}=0$ are indeed satisfied \cite{Boyer1983}. Remarkably, the equations governing self-dual Einstein spaces have been shown to be equivalent to the self-dual Yang-Mills equations with four translational symmetries and the gauge group of volume preserving diffeomorphisms \cite{MasonNewman1989}. The latter are encoded in the commutativity of two four-dimensional vector fields $X$ and $Y$ which are linear in a (spectral) parameter $\lambda$ and divergence free with respect to a volume form \cite{AblowitzClarkson1991}. Specifically, consider two commuting vector fields
\bela{N1}
  X(\lambda) = A_1 + \lambda A_2,\quad Y(\lambda) = A_3 + \lambda A_4
\ela
so that the commutativity condition $[X,Y] = 0$ is equivalent to
\bela{N2}
 [A_1,A_3] = 0,\quad [A_2,A_4] = 0,\quad [A_1,A_4] + [A_2,A_3] = 0.
\ela
The latter constitute partial differential equations for the coefficients $A_{\alpha}^i$ of the vector fields
\bela{N3}
  A_{\alpha} = A_{\alpha}^i\del_{x^i},
\ela
where we have adopted Einstein's summation convention over repeated indices. In addition, we assume that the vector fields $X$ and $Y$ are divergence free with respect to a volume form
\bela{N4}
  \highb{\vol = fdx^1\wedge dx^2\wedge dx^3\wedge dx^4}
\ela
encoded in a function $f(x^i)$ so that
\bela{N5}
  f\di A_{\alpha} = \del_{x^i}(fA_{\alpha}^i) = 0.
\ela
The commutativity of $X$ and $Y$ implies the compatibility of the Lax pair \cite{AblowitzClarkson1991}
\bela{N6}
  X\Phi = 0,\quad Y\Phi = 0.
\ela

Different choices of the coordinates $x^i$ lead to different but equivalent forms of the self-dual Einstein equations represented by the system \eqref{N2}, \eqref{N5}. For instance, Pleba\'nski's celebrated first heavenly equation \cite{Plebanski1975} 
\bela{N6a}
 \Omega_{x^1x^3}\Omega_{x^2x^4} - \Omega_{x^2x^3}\Omega_{x^1x^4} = 1,
\ela
which was, in fact, already recorded in 1936 in the context of ``wave geometry'' \cite{SibataMorinaga1936}, corresponds to the choice
(see, e.g., \cite{DoubrovFerapontov2010})
\bela{N6b}
 \begin{aligned}
 A_1 & = - \del_{x^3},\quad & A_2 & = \Omega_{x^1x^3}\del_{x^2} - \Omega_{x^2x^3}\del_{x^1}\\
 A_3 & = - \del_{x^4},\quad & A_4 & = \Omega_{x^1x^4}\del_{x^2} - \Omega_{x^2x^4}\del_{x^1}.
 \end{aligned}
\ela
The latter vector fields are indeed seen to be divergence free with respect to the function $f=1$ and one may verify that the commutator relations \eqref{N2} modulo the first heavenly equation \eqref{N6a} are satisfied. The corresponding metric reads \cite{Plebanski1975}
\bela{N6c}
  g = 2\Omega_{x^1x^3}dx^1dx^3 + 2\Omega_{x^2x^3}dx^2dx^3 + 2\Omega_{x^1x^4}dx^1dx^4 + 2\Omega_{x^2x^4}dx^2dx^4
\ela
for which the Ricci tensor $R_{ik}$ may be shown to vanish.

\section{Eigenfunctions as privileged coordinates}

Within the general setting \eqref{N1}-\eqref{N6} of self-dual Einstein spaces, we now consider four \high{(functionally independent)} eigenfunctions $\phi^i$ corresponding to four {\em distinct} parameters $\lambda_i$, that is,
\bela{N7}
  X(\lambda_i)\phi^i = 0,\quad Y(\lambda_i)\phi^i = 0,\quad i=1,2,3,4
\ela
(no summation over the index $i$ as explained below). The important cases of one pair and two pairs of coinciding parameters are dealt with in Sections 7 and 8 respectively. In terms of the new coordinates 
\bela{N8}
 y^i = \phi^i,
\ela
the vector fields $X$ and $Y$ are represented by
\bela{N9}
 \begin{aligned}
  X &= (X\phi^i)\del_{y^i} = (\lambda - \lambda_i)(A_2\phi^i)\del_{y^i}\\
  Y &= (Y\phi^i)\del_{y^i}\,= (\lambda - \lambda_i)(A_4\phi^i)\del_{y^i}
 \end{aligned}
\ela
and the volume form becomes
\bela{N10}
 \highb{\vol = \tilde{f}dy^1\wedge dy^2\wedge dy^3\wedge dy^4,\quad \tilde{f}=\frac{f}{J},\quad J = \det\left(\frac{\del y^i}{\del x^k}\right).}
\ela
Throughout this paper, indices on the parameters $\lambda_i$ are not taken into account when Einstein's summation convention is applied. Thus, expressions involving two repeated indices such as $\lambda_i a^i$ in which one index is attached \high{to} $\lambda$ do not denote a sum but expressions such as
\bela{N10a}
  \lambda_ia_ib^i = \sum_{i=1}^4
\lambda_ia_ib^i
\ela
encode summation over the index $i$. In particular, Einstein's summation convention applies to the vector field representation \eqref{N9}. The latter naturally leads to the introduction of the four vector fields
\bela{N11}
  \tilde{X}^{(i)} = (A_4\phi^i) X - (A_2\phi^i) Y.
\ela
Remarkably, even though the vector fields $\tilde{X}^{(i)}$ constitute linear combinations of the vector fields $X$ and $Y$ with {\em non-constant} coefficients, the following theorem obtains.

\begin{theorem}
 The vector fields $\tilde{X}^{(i)}$ are divergence free.
\end{theorem}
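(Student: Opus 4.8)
The plan is to compute the divergence of $\tilde{X}^{(i)}$ directly and to show that the potentially obstructing terms arising from its \emph{non-constant} coefficients cancel by virtue of the commutator relations \eqref{N2} together with the eigenfunction conditions \eqref{N7}. The starting point is the standard product rule for the divergence associated with the fixed volume form $\vol$, namely $\di(gV) = g\,\di V + V(g)$ for any function $g$ and vector field $V$, which follows immediately from $f\,\di V = \del_{x^i}(fV^i)$. Since each $A_\alpha$ is divergence free by \eqref{N5} and the coefficients of $X = A_1 + \lambda A_2$ and $Y = A_3 + \lambda A_4$ in this basis are constant in $x$, both $X$ and $Y$ are divergence free, i.e.\ $\di X = \di Y = 0$. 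Applying the product rule to $\tilde{X}^{(i)} = (A_4\phi^i)X - (A_2\phi^i)Y$ therefore annihilates the terms proportional to $\di X$ and $\di Y$ and leaves
\[ \di \tilde{X}^{(i)} = X(A_4\phi^i) - Y(A_2\phi^i). \]

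Next I would expand the right-hand side in the basis $A_1,\dots,A_4$. Writing $X(A_4\phi^i) = A_1A_4\phi^i + \lambda A_2A_4\phi^i$ and $Y(A_2\phi^i) = A_3A_2\phi^i + \lambda A_4A_2\phi^i$, the coefficient of $\lambda$ is $(A_2A_4 - A_4A_2)\phi^i = [A_2,A_4]\phi^i$, which vanishes by the second relation in \eqref{N2}. Thus the $\lambda$-dependence drops out entirely and the divergence reduces to the single parameter-independent expression $\di\tilde{X}^{(i)} = A_1A_4\phi^i - A_3A_2\phi^i$. The remaining task is to prove that this quantity is zero.

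This is the crux of the argument, and I expect it to be the only nontrivial step. The idea is to act with the third relation in \eqref{N2} on the eigenfunction $\phi^i$, giving $(A_1A_4 - A_4A_1 + A_2A_3 - A_3A_2)\phi^i = 0$. The eigenfunction conditions \eqref{N7} supply $A_1\phi^i = -\lambda_i A_2\phi^i$ and $A_3\phi^i = -\lambda_i A_4\phi^i$, so that $A_4A_1\phi^i = -\lambda_i A_4A_2\phi^i$ and $A_2A_3\phi^i = -\lambda_i A_2A_4\phi^i$. Invoking $[A_2,A_4]=0$ once more equates these two, whence $-A_4A_1\phi^i + A_2A_3\phi^i = 0$, and the relation above collapses to precisely $A_1A_4\phi^i - A_3A_2\phi^i = 0$. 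Combining this with the previous paragraph yields $\di\tilde{X}^{(i)} = 0$ for every $i$, which is the assertion. I note that this route exploits the relations $[A_2,A_4]=0$ and $[A_1,A_4]+[A_2,A_3]=0$ but, conveniently, not $[A_1,A_3]=0$; the latter furnishes an alternative derivation via $[A_1,A_3]\phi^i=0$, which however forces a division by $\lambda_i$ and is therefore less clean when one of the distinct parameters happens to vanish.
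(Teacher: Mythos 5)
Your proof is correct, but it is organised differently from the paper's. The paper isolates a general lemma: if two \emph{commuting} divergence-free vector fields $X$, $Y$ are given, then $(Yg)X-(Xg)Y$ is divergence free for \emph{any} function $g$ (the product rule reduces its divergence to $X(Yg)-Y(Xg)=[X,Y]g=0$). Setting $g=\phi^i$, the eigenfunction conditions \eqref{N7} enter only to factor $(Y\phi^i)X-(X\phi^i)Y=(\lambda-\lambda_i)\tilde{X}^{(i)}$ with a \emph{constant} prefactor, whence $\di\tilde{X}^{(i)}=0$. You instead compute $\di\tilde{X}^{(i)}=X(A_4\phi^i)-Y(A_2\phi^i)$ directly and kill the result coefficient-by-coefficient in $\lambda$: the $\lambda$-term by $[A_2,A_4]=0$, and the constant term $A_1A_4\phi^i-A_3A_2\phi^i$ by combining $[A_1,A_4]+[A_2,A_3]=0$ with the eigenfunction relations $A_1\phi^i=-\lambda_iA_2\phi^i$, $A_3\phi^i=-\lambda_iA_4\phi^i$ and $[A_2,A_4]=0$; every step checks out. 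What each route buys: the paper's argument is shorter, exhibits the structural reason the construction works (the lemma holds for arbitrary $g$; eigenfunctions are special only because they make the prefactor constant), and sidesteps the implicit division by $\lambda-\lambda_i$ via a one-line identity. Your computation, by contrast, makes explicit exactly which of the component relations \eqref{N2} are needed --- notably $[A_1,A_3]=0$ is never used --- and your closing remark that the alternative derivation via $[A_1,A_3]\phi^i=0$ degenerates when $\lambda_i=0$ is accurate and is information not visible in the paper's packaged argument.
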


\begin{proof}
The proof of this theorem is based on the general (obvious) fact that if two commuting vector fields $X$ and $Y$ are divergence free then the vector field $(Yg)X - (Xg)Y$ is likewise divergence free for any function $g$. Hence, if we set $g = \phi^i$ then
\bela{N12}
  0 = \di [(Y\phi^i)X - (X\phi^i)Y] = (\lambda-\lambda_i)\di [(A_4\phi^i) X - (A_2\phi^i) Y]
\ela
since $(A_1+\lambda_iA_2)\phi^i=0$ and $(A_3+\lambda_iA_4)\phi^i=0$.
\end{proof}

In terms of components, the vector fields $\tilde{X}^{(i)}$ adopt the form
\bela{N13}
  \tilde{X}^{(i)} = (\lambda-\lambda_k)\tilde{A}^{ik}\del_{y^k},\quad \tilde{A}^{ik} = (A_4\phi^i)(A_2\phi^k )- (A_2\phi^i)(A_4\phi^k).
\ela
It is observed that the coefficients of the skew-symmetric matrix $\tilde{A}$ may be regarded as Pl\"ucker coordinates \cite{HodgePedoe1994} of a line
\bela{N14}
  A_4\left(\bear{c}\phi^1\\ \phi^2\\ \phi^3\\ \phi^4\ear\right)\wedge A_2\left(\bear{c}\phi^1\\ \phi^2\\ \phi^3\\ \phi^4\ear\right)
\ela
in a three-dimensional projective space represented in terms of homogeneous coordinates. Accordingly, the original Lax pair \eqref{N6} may equivalently be formulated as the set of four equations
\bela{N15}
  \tilde{f}\tilde{X}^{(i)}\Phi = A^{ik}D_k\Phi = 0,\quad A^{ik} = \tilde{f}\tilde{A}^{ik},\quad D_k = (\lambda-\lambda_k)\del_{y^k}
\ela
since $\rank A = \rank \tilde{A}=2$. The latter is reflected by the Pl\"ucker relation
\bela{N16}
 \pf(A) = A^{12}A^{34} + A^{23}A^{14} + A^{31}A^{24} = 0,
\ela
where $\pf(A)$ denotes the Pfaffian of $A$. The relevance of this observation is revealed below. In particular, the significance of the scaling of the matrix $\tilde{A}$ is explained.

\section{The general heavenly equation}

We are now in a position to present the key theorem of this paper.

\begin{theorem}\label{key}
Let $\phi^i$, $i=1,2,3,4$ be four eigenfunctions associated with any vector field representation $X(x^i;\lambda)$, $Y(x^i;\lambda)$ of self-dual Einstein spaces and distinct parameters $\lambda_i$. \high{Then, in terms of the independent variables $y^i=\phi^i$, the self-dual Einstein equations are transformed into the general heavenly equation
\bela{N23}
 \begin{split}
  & \phantom{+}\,\,\, (\lambda_1-\lambda_2)(\lambda_3-\lambda_4)\Theta_{y^1y^2}\Theta_{y^3y^4}\\
  & +  (\lambda_2-\lambda_3)(\lambda_1-\lambda_4)\Theta_{y^2y^3}\Theta_{y^1y^4}\\  
  & +  (\lambda_3-\lambda_1)(\lambda_2-\lambda_4)\Theta_{y^3y^1}\Theta_{y^2y^4} = 0
\end{split}
\ela
for some potential $\Theta$ encoded in the divergence-free vector fields $X$ and $Y$.}
\end{theorem}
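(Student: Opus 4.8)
The plan is to recognise the target equation \eqref{N23} as nothing but the Pl\"ucker relation \eqref{N16} rewritten in terms of a potential. Indeed, introducing the skew-symmetric array $M^{ik}=(\lambda_i-\lambda_k)\Theta_{y^iy^k}$ (no summation), one has $\pf(M)=M^{12}M^{34}+M^{23}M^{14}+M^{31}M^{24}$, and this is verbatim the left-hand side of \eqref{N23}. Since the Pl\"ucker relation $\pf(A)=0$ is already available from \eqref{N16}, the entire theorem reduces to the construction of a single scalar potential $\Theta$ for which $A^{ik}=(\lambda_i-\lambda_k)\Theta_{y^iy^k}$. Equivalently, setting $B^{ik}:=A^{ik}/(\lambda_i-\lambda_k)$ for $i\neq k$, one must show that the array $B^{ik}$, which is symmetric because $A$ is skew-symmetric while $\lambda_i-\lambda_k$ is antisymmetric, is the off-diagonal Hessian $\Theta_{y^iy^k}$ of some function $\Theta$; the diagonal entries $\Theta_{y^iy^i}$ do not enter \eqref{N23} and remain free.

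The construction of $\Theta$ is thus an integrability statement: $B^{ik}$ is realised as $\Theta_{y^iy^k}$ precisely when $\del_{y^l}B^{ik}$ is totally symmetric in the three distinct indices $i,k,l$. To verify this I would first expand the divergence-free property of the fields $\tilde{X}^{(i)}$ proved above in powers of $\lambda$, which yields $\del_{y^k}A^{ik}=0$ and $\lambda_k\del_{y^k}A^{ik}=0$ (summed over $k$) for each $i$. These identities, however, constrain only the derivatives $\del_{y^k}A^{ik}$ whose differentiation index coincides with one of the superscripts; they say nothing about the transverse derivatives $\del_{y^l}A^{ik}$ with $l\neq i,k$ that enter the symmetrisation. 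The transverse information must come from the commutativity $[X,Y]=0$ itself. Rewriting the commutator relations \eqref{N2} in the $y$-coordinates by means of $A_1\phi^i=-\lambda_iA_2\phi^i$, $A_3\phi^i=-\lambda_iA_4\phi^i$ and the explicit factorisation $A^{ik}=\tilde{f}(q^ip^k-p^iq^k)$ with $p^i=A_2\phi^i$, $q^i=A_4\phi^i$, I would extract the complementary relations governing the transverse derivatives and combine them with the rank-two (Pl\"ucker) structure to obtain the required symmetry of $\del_{y^l}B^{ik}$. The Poincar\'e lemma then delivers $\Theta$ locally.

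It is here that the scaling of the Pl\"ucker line by $\tilde{f}$, whose significance is announced immediately after \eqref{N16}, becomes essential: among all rescalings $A=\rho\tilde{A}$ of the bivector, the specific choice $\rho=\tilde{f}$ dictated by the volume form is the one for which the divergence-free identities hold, and hence the one for which $B^{ik}$ closes up to a Hessian. I expect this step, namely the verification that divergence-freeness together with commutativity forces the transverse derivatives into the symmetric pattern with coefficients governed precisely by the differences $\lambda_i-\lambda_j$, to be the main obstacle. The divergence conditions alone are silent on these derivatives, so the full force of the commutativity of $X$ and $Y$, together with the rank-two condition, has to be used.

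Once $\Theta$ has been secured, the conclusion is immediate: substituting $A^{ik}=(\lambda_i-\lambda_k)\Theta_{y^iy^k}$ into $\pf(A)=A^{12}A^{34}+A^{23}A^{14}+A^{31}A^{24}=0$ turns the three products into $(\lambda_1-\lambda_2)(\lambda_3-\lambda_4)\Theta_{y^1y^2}\Theta_{y^3y^4}$, $(\lambda_2-\lambda_3)(\lambda_1-\lambda_4)\Theta_{y^2y^3}\Theta_{y^1y^4}$ and $(\lambda_3-\lambda_1)(\lambda_2-\lambda_4)\Theta_{y^3y^1}\Theta_{y^2y^4}$ respectively, which is exactly \eqref{N23}. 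I would finally remark that this is the potential naturally attached to the coordinates $y^i=\phi^i$; its gradient is generically not again a quadruple of eigenfunctions, that feature being reserved for the Legendre-transformed potential and providing the motivation for the Legendre transformations studied later in the paper.
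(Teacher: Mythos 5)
Your concluding algebra (Pfaffian $\leftrightarrow$ general heavenly equation) and your identification of the divergence identities \eqref{N17} are both sound, but the statement to which you reduce the theorem --- the existence of $\Theta$ with $A^{ik}=(\lambda_i-\lambda_k)\Theta_{y^iy^k}$ --- is not what those identities deliver, and is in fact generically false. The missing idea is the dualisation \eqref{N18}: the paper parametrises not $A$ but its Hodge dual $\omega$, defined by $A^{lm}=\epsilon^{iklm}\omega_{ik}$ (so $A^{12}\propto\omega_{34}$, etc., complementary index pairs), and sets $\omega_{ik}=(\lambda_k-\lambda_i)\Theta_{y^iy^k}$ as in \eqref{N20}. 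Under this duality your difficulty inverts itself: a divergence term $\del_{y^k}A^{ik}$ is a \emph{transverse} derivative of a component of $\omega$, so \eqref{N17} is exactly \eqref{N19}, and for \emph{distinct} $\lambda_i$ these are precisely the integrability conditions for \eqref{N19a}. Indeed, for the index triple $\{1,2,3\}$, writing $B_{ik}=\omega_{ik}/(\lambda_k-\lambda_i)$, $u=\del_{y^1}B_{23}$, $v=\del_{y^2}B_{31}$, $w=\del_{y^3}B_{12}$ and $a_1=\lambda_3-\lambda_2$, $a_2=\lambda_1-\lambda_3$, $a_3=\lambda_2-\lambda_1$, the two conditions \eqref{N19} read $a_1u+a_2v+a_3w=0$ and $\lambda_1a_1u+\lambda_2a_2v+\lambda_3a_3w=0$; this $2\times3$ system has rank two with kernel spanned by $(1,1,1)$, so $u=v=w$, i.e.\ $\del_{y^l}B_{ik}$ is totally symmetric in its three distinct indices and $\Theta$ exists by the Poincar\'e lemma. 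No input beyond Theorem 3.1 is needed --- in particular no further use of $[X,Y]=0$ --- and since $\pf(A)\propto\pf(\omega)$, the rank-two condition \eqref{N16} immediately yields \eqref{N23}. By contrast, your plan to force a Hessian parametrisation of $A$ itself out of the commutator relations cannot succeed: combining your ansatz with the (correct) dual one, its integrability would require identities such as $\frac{\lambda_4-\lambda_3}{\lambda_1-\lambda_2}\Theta_{y^3y^3y^4}=\frac{\lambda_4-\lambda_1}{\lambda_2-\lambda_3}\Theta_{y^1y^1y^4}$, which relate pure third derivatives that the heavenly equation (built solely from mixed second derivatives) leaves completely free; generic solutions violate them, so the relations you hope to extract from commutativity do not exist.

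A further, smaller error: your closing remark asserts that the gradient of $\Theta$ is generically \emph{not} a quadruple of eigenfunctions. The opposite is true and is a point the paper emphasises: by Remark 5.1, $\psi_i=\Theta_{y^i}$ is an eigenfunction for $\lambda=\lambda_i$, and it is precisely this fact that motivates the Legendre transformation of Section 6.
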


\begin{remark}
The differential equation \eqref{N23}, which may be formulated as
\bela{N23a}
 c_1 \Theta_{y^1y^2}\Theta_{y^3y^4} +  c_2 \Theta_{y^2y^3}\Theta_{y^1y^4} +  c_3 \Theta_{y^3y^1}\Theta_{y^2y^4} = 0,\quad c_1 + c_2 + c_3 = 0,
\ela
was originally derived as the continuum limit of the permutability theorem for both the classical Tzitz\'eica equation of affine differential geometry and its integrable discretisation \cite{Schief1996,Schief1999,BobenkoSchief1999}. In this context, it was also observed that this equation constitutes yet another avatar of the self-dual Einstein equations. In connection with the classification of integrable symplectic Monge-Amp\'ere equations, it was rediscovered and termed general heavenly equation \cite{DoubrovFerapontov2010}. The above theorem demonstrates that the general heavenly equation is privileged in that the associated coordinates play the role of eigenfunctions regardless of the concrete realisation of the vector fields $A_{\alpha}$.
\end{remark}

In order to verify the above theorem, we begin by noting that the scaling of the matrix $\tilde{A}$, leading to the matrix $A$ as defined by \eqref{N15}$_2$, has been introduced for convenience so that the vanishing divergence of the vector fields $\tilde{X}^{(i)}$ simplifies to the system of equations
\bela{N17}
  \del_{y^k}A^{ik} = 0,\quad \lambda_k\del_{y^k}A^{ik} = 0.
\ela
In terms of the skew-symmetric matrix $\omega$ defined by
\bela{N18}
 A^{lm} = \epsilon^{iklm}\omega_{ik},
\ela
where $\epsilon^{iklm}$ denotes the totally anti-symmetric Levi-Civita symbol,
these adopt the form
\bela{N19}
 \del_{y^{[i}}\omega_{kl]} = 0,\quad  \lambda_{[i}\del_{y^i}\omega_{kl]} = 0
\ela
and may be regarded as the integrability conditions for the existence of a potential $\Theta$ defined by
\bela{N19a}
 \Theta_{y^iy^k} = \frac{\omega_{ik}}{\lambda_k-\lambda_i}.
\ela
Here, the square brackets indicate total anti-symmetrisation. Accordingly, the solution of \eqref{N19} is parametrised in terms of $\Theta$ according to
\bela{N20}
 \omega_{ik} = (\lambda_k - \lambda_i)\Theta_{y^iy^k}.
\ela
Now, the linear system \eqref{N15}$_1$ may be formulated as 
\bela{N21}
  \left(\bear{cccc} 0 & \omega_{34} & \omega_{42} & \omega_{23}\\ \omega_{43} & 0 & \omega_{14} & \omega_{31}\\
                    \omega_{24} & \omega_{41} & 0 & \omega_{12}\\ \omega_{32} & \omega_{13} & \omega_{21} & 0\ear\right)
  \left(\bear{c}D_1\Phi\\ D_2\Phi\\ D_3\Phi\\ D_4\Phi\ear\right)  = 0
\ela
with the associated rank 2 condition $\pf(A)=0$ expressed as
\bela{N22}
  \pf(\omega) = \omega_{12}\omega_{34} + \omega_{23}\omega_{14} + \omega_{31}\omega_{24} = 0.
\ela
The parametrisation \eqref{N20} therefore leads to the general heavenly equation \eqref{N23} with, for instance,
\bela{N24}
  \tilde{X}^{(3)}\Phi = 0,\quad \tilde{X}^{(4)}\Phi = 0
\ela
constituting its standard Lax pair \cite{Schief1996}. Here, it should be emphasised that the scaled vector fields
\bela{N25}
  \hat{X}^{(3)} = \frac{\omega_{24}}{\omega_{12}}D_1 + \frac{\omega_{41}}{\omega_{12}}D_2 + D_4,\quad
  \hat{X}^{(4)} = \frac{\omega_{32}}{\omega_{21}}D_1 + \frac{\omega_{13}}{\omega_{21}}D_2 + D_3
\ela
commute and are divergence free with respect to the volume form 
\bela{N26}
 \highb{\vol = \Theta_{y^1y^2}dy^1\wedge dy^2\wedge dy^3\wedge dy^4}
\ela
as originally observed in \cite{Schief1996}. Accordingly, the general heavenly equation may be regarded as an ``invariant'' form of the self-dual Einstein equations since neither the eigenfunctions $\phi^i$ nor the potential $\Theta$ are affected by coordinate transformations $x^i\rightarrow f^i(x^k)$. In particular, any of the known ``heavenly'' equations governing self-dual Einstein spaces (see, e.g., \cite{DoubrovFerapontov2010, PlebanskiPrzanowski1996} and references therein) may be mapped to the general heavenly equation by means of the algorithm presented in this section.

\section{Invariant definition of the potential \boldmath $\Theta$}

In order to interpret the potential $\Theta$ within the original self-dual Einstein setting associated with the vector fields $A_{\alpha}(x^i)$, we first \high{make the following important observation.}

\begin{remark}
\high{The dependent variable $\Theta$ also encodes eigenfunctions in that} the quantities
\bela{N27}
 \psi_i = \Theta_{y^i}
\ela
form another set of eigenfunctions corresponding to the parameters $\lambda_i$. Indeed, since
\bela{N28}
  \left.D_k\psi_i\right|_{\lambda=\lambda_i} = (\lambda_i-\lambda_k)\Theta_{y^iy^k} = \omega_{ki},\quad i\neq k,
\ela
three equations of the linear system \eqref{N21} for $\Phi=\psi_i$ and $\lambda=\lambda_i$ are identically satisfied, while the remaining equation turns out to be $\pf(\omega)=0$. It is remarked that this implies that the general solution of the linear system at $\lambda=\lambda_i$ is an arbitrary function of the particular eigenfunctions $\phi^i=y^i$ and $\psi_i = \Theta_{y^i}$.
\end{remark}

The following theorem provides the basis of this section.

\begin{theorem}
Let $\phi^i$, $i=1,\ldots,4$ be four eigenfunctions of the self-dual Einstein equations for distinct parameters $\lambda_i$. Then, the two-forms $\Omega_k$, $k=1,\ldots,4$ defined by
\bela{N29}
 \Omega_k = \frac{1}{4}\epsilon_{i\underline{k}lm}\frac{A^{lm}}{\lambda_{\underline{k}}-\lambda_i}d\phi^i\wedge d\phi^{\underline{k}},
\ela
where 
\bela{N30}
  A^{lm} = \frac{f}{J}\tilde{A}^{lm},\quad \tilde{A}^{lm} = (A_4\phi^l)(A_2\phi^m) - (A_2\phi^l) (A_4\phi^m), \quad J = \det\left(\frac{\del \phi^i}{\del x^k}\right),
\ela
have the following properties:
\bela{N31}
  d\Omega_k = 0,\quad \Omega_{\underline{k}}\wedge\Omega_{\underline{k}} = 0,\quad \Omega_{\underline{k}}\wedge d\phi^{\underline{k}} = 0,\quad \sum_{k=1}^4 \Omega_k = 0.
\ela
Here, the underbar indicates that there is no summation over the corresponding index. 
\end{theorem}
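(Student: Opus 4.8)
The plan is to concentrate the entire content of the theorem in a single identification and then read off the four properties by elementary exterior calculus. Specifically, I would show that in the privileged coordinates $y^i=\phi^i$ each two-form factorises as
\[
 \Omega_k = d\psi_{\underline{k}}\wedge d\phi^{\underline{k}},\qquad \psi_k = \Theta_{y^k},
\]
where the $\psi_k$ are exactly the second family of eigenfunctions identified in the preceding Remark, cf.\ \eqref{N27}. Once this factorised form is in hand, every assertion in \eqref{N31} becomes structural rather than computational.

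To establish the factorisation I would proceed in three steps. First, since $y^i=\phi^i$, replace $d\phi^i$ by $dy^i$ throughout. Second, contract the two Levi-Civita symbols: substituting $A^{lm}=\epsilon^{pqlm}\omega_{pq}$ from \eqref{N18} and using the identity $\epsilon_{iklm}\epsilon^{pqlm}=2(\delta_i^p\delta_k^q-\delta_i^q\delta_k^p)$, the combination $\epsilon_{i\underline{k}lm}A^{lm}$ collapses to $4\omega_{i\underline{k}}$, so that the prefactor $\tfrac14$ is absorbed. Third, invoke the parametrisation $\omega_{ik}=(\lambda_k-\lambda_i)\Theta_{y^iy^k}$ of \eqref{N20} to cancel the denominators $\lambda_{\underline{k}}-\lambda_i$; this leaves $\Omega_k=\sum_i\Theta_{y^iy^{\underline{k}}}\,dy^i\wedge dy^{\underline{k}}$, which is precisely $d\psi_{\underline{k}}\wedge dy^{\underline{k}}$.

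The four properties then follow at once. Closedness $d\Omega_k=0$ holds because $\Omega_k$ is a wedge of two closed one-forms, so $d^2=0$ kills each factor. Both $\Omega_{\underline{k}}\wedge\Omega_{\underline{k}}=0$ and $\Omega_{\underline{k}}\wedge d\phi^{\underline{k}}=0$ follow from the repeated factor $d\phi^{\underline{k}}$ together with $d\psi_{\underline{k}}\wedge d\psi_{\underline{k}}=0$, since any term carrying $dy^{\underline{k}}\wedge dy^{\underline{k}}$ vanishes. Finally, $\sum_k\Omega_k=\sum_{i,k}\Theta_{y^iy^k}\,dy^i\wedge dy^k=0$ because $\Theta_{y^iy^k}$ is symmetric while $dy^i\wedge dy^k$ is skew-symmetric in $(i,k)$.

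I expect the only genuinely error-prone step to be the index bookkeeping in the contraction of the Levi-Civita symbols, in particular keeping the fixed index $\underline{k}$ excluded from the summations and confirming that the factor $\tfrac14$ and the overall sign come out as claimed. The single conceptual point is the recognition that $\Omega_k=d\psi_{\underline{k}}\wedge d\phi^{\underline{k}}$ ties the two-forms to the eigenfunctions $\psi_k=\Theta_{y^k}$; this is what makes the null (decomposability) and closedness properties automatic rather than accidental. The passage to the $y$-coordinates is justified by the assumed functional independence of the $\phi^i$, i.e.\ $J\neq0$.
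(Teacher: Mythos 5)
Your index bookkeeping is correct: contracting the two Levi-Civita symbols does give $\epsilon_{i\underline{k}lm}A^{lm}=4\omega_{i\underline{k}}$, and properties \eqref{N31}$_{2,3,4}$ are indeed immediate (in fact they need no factorisation at all: every term of \eqref{N29} carries the factor $d\phi^{\underline{k}}$, and the coefficient $\epsilon_{i\underline{k}lm}A^{lm}/(\lambda_{\underline{k}}-\lambda_i)$ is symmetric in $(i,\underline{k})$ while the wedge is skew, which is why the paper dismisses them as ``evident''). The gap is in your third step. Invoking the parametrisation \eqref{N20}, $\omega_{ik}=(\lambda_k-\lambda_i)\Theta_{y^iy^k}$, presupposes the existence of the potential $\Theta$, but that existence is precisely what the closedness $d\Omega_k=0$ is needed to establish: in the coordinates $y^i$, the condition $d\Omega_k=0$ is exactly the integrability condition for the local existence of $\psi_k$ with $\Omega_k=d\psi_{\underline{k}}\wedge dy^{\underline{k}}$, and, combined with \eqref{N31}$_4$, for the existence of $\Theta$ with $\psi_k=\Theta_{y^k}$. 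That is the content of the theorem immediately following this one (equations \eqref{N35}, \eqref{N36}), and it is the whole point of Section 5, which is to give an \emph{invariant} construction of $\Theta$. Section 4 merely asserts that \eqref{N19} ``may be regarded as'' the integrability conditions for \eqref{N19a}; the paper's actual justification of that assertion is the theorem you are asked to prove. So your argument assumes, in an equivalent guise, the statement to be proven: it is circular within the paper's logic, and as a self-contained proof it replaces the crux by an unproven appeal to \eqref{N20}. The factorisation you aim for is the right structural insight, but it must come out as a \emph{consequence} of the theorem, not as its input.

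The paper's proof of the nontrivial property $d\Omega_k=0$ works directly from the divergence-free conditions \eqref{N17}, $\del_{y^k}A^{ik}=0$ and $\lambda_k\del_{y^k}A^{ik}=0$, which were established in Theorem 3.1 with no reference to $\Theta$. Expressing $d\Omega_k$ in the $y$-coordinates, the coefficient of $dy^{m_0}\wedge dy^{n_0}\wedge dy^{\underline{k}}$ (all indices distinct) is \eqref{N33}; clearing denominators converts this into $(\lambda_{\underline{k}}-\lambda_m)\del_{y^m}A^{l_0m}=0$ (summed over $m$, the terms $m=\underline{k}$ and $m=l_0$ vanishing identically), which is manifestly a linear combination of the two conditions \eqref{N17}. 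If you want to keep your overall structure, replace your third step by this computation: derive $d\Omega_k=0$ from \eqref{N17}, and then obtain the factorisation $\Omega_k=d\psi_{\underline{k}}\wedge d\phi^{\underline{k}}$ and the potential $\Theta$ afterwards, via Darboux's theorem and property \eqref{N31}$_4$, rather than the other way round.
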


\begin{proof}
Since the algebraic properties \eqref{N31}$_{2,3,4}$ are evident, it remains to show that $d\Omega_k = 0$. In terms of the coordinates $y^i$, the latter is given by 
\bela{N32}
 \epsilon_{i\underline{k}lm}\frac{\del_{y^n}A^{lm}}{\lambda_{\underline{k}}-\lambda_i}dy^n\wedge dy^i\wedge dy^{\underline{k}} = 0.
\ela
Hence, for any fixed $k$, the contribution of the terms proportional to $dy^{m_0}\wedge dy^{n_0}\wedge dy^k$ reads
\bela{N33}
 \frac{\del_{y^{m_0}}A^{l_0m_0}}{\lambda_k-\lambda_{n_0}} +  \frac{\del_{y^{n_0}}A^{l_0n_0}}{\lambda_k-\lambda_{m_0}} = 0,
\ela
wherein the indices $k,l_0,m_0,n_0$ are \high{fixed and} distinct. On clearing the denominators of the above, this may be formulated as
\bela{N34}
 (\lambda_k-\lambda_m)\del_{y^m}A^{l_0m} = 0
\ela
since the terms for $m=k$ and $m=l_0$ \high{in the above sum} vanish identically. Finally, the vanishing divergence conditions \eqref{N17} imply that the relations \eqref{N34} are indeed satisfied and, hence, the two-forms $\Omega_k$ are closed.
\end{proof}

The above theorem encodes the existence of a potential $\Theta$ which coincides with that derived in Section 4 in connection with the parametrisation of the skew-symmetric matrix $A$ in terms of the coordinates $y^i$.

\begin{theorem}
There exist functions $\psi_k$, $k=1,\ldots,4$ such that
\bela{N35}
 \Omega_k = d\psi_{\underline{k}}\wedge d\phi^{\underline{k}}.
\ela
These constitute eigenfunctions for $\lambda=\lambda_k$ and give rise to the existence of a potential $\Theta$ via
\bela{N36}
  d\Theta  = \psi_k d\phi^k.
\ela
\end{theorem}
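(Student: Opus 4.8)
The plan is to establish the theorem in two logically separate parts: first, the existence of the functions $\psi_k$ satisfying \eqref{N35}, and second, the fact that these $\psi_k$ are themselves eigenfunctions and assemble into a single potential $\Theta$ via \eqref{N36}. The two parts are driven by different pieces of the preceding theorem, so I would treat them in turn.

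For the existence of $\psi_k$, the natural tool is the Poincar\'e lemma applied to the two-form $\Omega_k$. From \eqref{N31} we already know that each $\Omega_k$ is closed, so locally $\Omega_k = d\eta_k$ for some one-form $\eta_k$. The task is to upgrade this to the factorised shape $d\psi_{\underline{k}}\wedge d\phi^{\underline{k}}$. Here the algebraic constraints \eqref{N31}$_{2,3}$ are decisive: the condition $\Omega_{\underline{k}}\wedge\Omega_{\underline{k}}=0$ says $\Omega_k$ is decomposable (a simple two-form), hence it can be written as a wedge of two one-forms, while $\Omega_{\underline{k}}\wedge d\phi^{\underline{k}}=0$ forces one of the factors to be proportional to $d\phi^{\underml{k}}$, so that $\Omega_k = \theta_k\wedge d\phi^{\underline{k}}$ for some one-form $\theta_k$. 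First I would show that closedness then makes $\theta_k$ exact modulo multiples of $d\phi^{\underline{k}}$: expanding $0=d\Omega_k=d\theta_k\wedge d\phi^{\underline{k}}$ gives $d\theta_k = \alpha\wedge d\phi^{\underline{k}}$, and a standard argument (restricting to level sets of $\phi^{\underline{k}}$, where $d\theta_k$ pulls back to zero, so $\theta_k$ is closed and hence exact along the leaves) produces a function $\psi_{\underline{k}}$ with $\theta_k\equiv d\psi_{\underline{k}}$ up to a $d\phi^{\underline{k}}$ term, yielding \eqref{N35}.

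For the second part, I would read off the components of \eqref{N35} in the coordinates $y^i=\phi^i$. Writing $\Omega_k$ out using \eqref{N29} and comparing with $d\psi_{\underline{k}}\wedge d\phi^{\underline{k}}$ identifies $\partial_{y^j}\psi_{\underline{k}}$ (for $j\neq k$) with the appropriate entry $A^{lm}/(\lambda_{\underline{k}}-\lambda_i)$, and recalling the definition $A^{lm}=\epsilon^{iklm}\omega_{ik}$ together with the parametrisation \eqref{N20}, this should reduce precisely to $D_j\psi_k|_{\lambda=\lambda_k}=\omega_{jk}$ — which is exactly the computation \eqref{N28} that the earlier remark used to verify that $\psi_k=\Theta_{y^k}$ satisfies the Lax pair at $\lambda=\lambda_k$. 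In other words, matching \eqref{N35} against \eqref{N29} should pin down $\psi_k$ as $\Theta_{y^k}$ with $\Theta$ the potential of Section 4, thereby proving both that the $\psi_k$ are eigenfunctions and that \eqref{N36} holds with the very same $\Theta$.

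\textbf{The main obstacle} I anticipate is making the transition from ``$\Omega_k=d\psi_{\underline{k}}\wedge d\phi^{\underline{k}}$ for each $k$ separately'' to the existence of a \emph{single} $\Theta$ with $d\Theta=\psi_k\,d\phi^k$ as in \eqref{N36}. The one-form $\psi_k\,d\phi^k$ must be shown to be closed, i.e. $d(\psi_k\,d\phi^k)=0$; this is where the remaining property \eqref{N31}$_4$, namely $\sum_k\Omega_k=0$, must be brought to bear, since it relates the individually-constructed $\psi_k$ across different $k$. Concretely, $d(\psi_k\,d\phi^k)=d\psi_k\wedge d\phi^k=\sum_k\Omega_k=0$, so the summation constraint is exactly the compatibility condition that allows the Poincar\'e lemma to deliver one global potential $\Theta$. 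I expect the delicate point to be keeping the index bookkeeping of the underbarred (no-summation) indices straight throughout, and confirming that the $\psi_k$ produced factor-by-factor can be normalised consistently so that their collective differential assembles cleanly into $d\Theta$.
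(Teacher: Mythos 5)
Your proposal is correct and takes essentially the same route as the paper: the paper invokes Darboux's theorem to pass from the properties \eqref{N31}$_{1,2,3}$ to the factorisation \eqref{N35} (you simply unpack that step by hand via decomposability, the Poincar\'e lemma and restriction to level sets of $\phi^{\underline{k}}$), then obtains $\Theta$ exactly as you do from $d\psi_k\wedge d\phi^k=\sum_{k=1}^4\Omega_k=0$. Your final step is also the paper's: comparing \eqref{N35} with \eqref{N29} recovers \eqref{N19a} via \eqref{N18}, so that $\psi_k=\Theta_{y^k}$ and the eigenfunction property follows from the computation \eqref{N28} of the preceding remark.
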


\begin{proof}
By virtue of Darboux's theorem, the properties \eqref{N31}$_{1,2}$ imply that the two-forms $\Omega_k$ may be written as exterior products of pairs of differentials. Property \eqref{N31}$_3$ then shows that $\Omega_k$ is of the form \eqref{N35} for some function $\psi_k$. The compatibility condition $d\psi_k\wedge d\phi^k =0$ associated with \eqref{N36} guaranteeing the existence of the potential $\Theta$ is satisfied since
\bela{N38}
  d\psi_k\wedge d\phi^k = \sum_{k=1}^4 \Omega_k = 0
\ela
by virtue of \eqref{N31}$_4$. Furthermore, since, in this case,
\bela{N39}
 \Theta_{y^k} = \psi_k,
\ela
comparison of \eqref{N29} and \eqref{N35} results in
\bela{N40}
 \Theta_{y^iy^k} =  \frac{1}{4}\epsilon_{iklm}\frac{A^{lm}}{\lambda_k-\lambda_i}
\ela
which \high{coincides with \eqref{N19a} by virtue of \eqref{N18}} so that \high{$\psi_k = \Theta_{y^k}$} indeed constitutes an eigenfunction for $\lambda=\lambda_k$. It is emphasised that, on use of the pair \eqref{N29}, \eqref{N35} regarded as a definition of the functions $\psi_k$, one may also directly show that $X(\lambda_k)\psi_k = Y(\lambda_k)\psi_k=0$.
\end{proof}

\section{A Legendre transformation}

It is evident that Theorem \ref{key} applied to the general heavenly equation
\bela{N41}
 \begin{split}
  & \phantom{+}\,\,\, (\mu_1-\mu_2)(\mu_3-\mu_4)\Xi_{x^1x^2}\Xi_{x^3x^4}\\
  & +  (\mu_2-\mu_3)(\mu_1-\mu_4)\Xi_{x^2x^3}\Xi_{x^1x^4}\\  
  & +  (\mu_3-\mu_1)(\mu_2-\mu_4)\Xi_{x^3x^1}\Xi_{x^2x^4} = 0
\end{split}
\ela
provides an invariance of the general heavenly equation since it maps \eqref{N41} to the general heavenly equation \eqref{N23}. In general, the associated spectral parameters $\lambda_i$ do not have to be the four parameters $\mu_i$ in the general heavenly equation \eqref{N41} but if we do make this special choice then the corresponding eigenfunctions are given by
\bela{N42}
  \phi^i = f^i(x^i,\Xi_{x^i})
\ela
as pointed out in the previous section. In particular, the choice
\bela{N43}
 \phi^i = \Xi_{x^i}
\ela
is admissible. This raises the question as to whether the quantities $x^i$ and $x_i := \Xi_{x^i}$ play symmetric roles in the general heavenly equation \eqref{N41}. In order to demonstrate that this is the case, we observe that the general heavenly equation may be formulated as
\bela{N44}
 \begin{split}
  &  \phantom{+}\,\,\,  (\mu_2-\mu_3)(\mu_1-\mu_4)(\Xi_{x^1x^2}\Xi_{x^3x^4} - \Xi_{x^2x^3}\Xi_{x^1x^4})\\  
  & +  (\mu_3-\mu_1)(\mu_2-\mu_4)(\Xi_{x^1x^2}\Xi_{x^3x^4} - \Xi_{x^3x^1}\Xi_{x^2x^4}) = 0
\end{split}
\ela
and state the following theorem.

\begin{theorem}
The general heavenly equation in the form
\bela{N45}
 \begin{split}
  &  \phantom{+}\,\,\,  (\mu_1-\mu_4)(\mu_2-\mu_3)dx_1\wedge dx_3 \wedge dx^1 \wedge dx^3\\ 
  & + (\mu_1-\mu_3)(\mu_2-\mu_4)dx_1\wedge dx_4 \wedge dx^1 \wedge dx^4 = 0,\qquad d\Xi = x_i dx^i
 \end{split}
\ela
is invariant under the Legendre transformation
\bela{N46}
  \high{\Xi(x^k) \rightarrow \tilde{\Xi}(x_k)} = x_ix^i - \Xi.
\ela
\end{theorem}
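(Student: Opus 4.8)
The plan is to exploit the fact that the differential-form presentation \eqref{N45} has been arranged precisely so that the Legendre transformation \eqref{N46} acts as the bare interchange of the two groups of variables $x^i$ and $x_i := \Xi_{x^i}$. First I would record the elementary properties of \eqref{N46}: differentiating $\tilde\Xi = x_ix^i-\Xi$ and using $d\Xi = x_i dx^i$ gives $d\tilde\Xi = x^i dx_i$, so the transformation is involutive and, under the standard nondegeneracy assumption $\det(\Xi_{x^ix^k})\neq 0$, the $x_i$ may serve as new independent coordinates with $x^i = \tilde\Xi_{x_i}$. Thus \eqref{N46} simply interchanges the roles of $x^i$ and $x_i$, replacing the generating relation $d\Xi = x_i dx^i$ by its mirror image $d\tilde\Xi = x^i dx_i$.

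Second, I would verify that the four-form in \eqref{N45} really is a restatement of the general heavenly equation \eqref{N41}. Substituting $dx_i = \Xi_{x^ix^k}dx^k$ into the two four-forms and retaining only the contributions not annihilated by the trailing factors $dx^a\wedge dx^b$ yields, upon collecting the coefficient of $dx^1\wedge dx^2\wedge dx^3\wedge dx^4$, exactly the two-term form \eqref{N44} (itself the reduction of the three-term expression \eqref{N23a} via $c_1+c_2+c_3=0$), up to an overall nonzero constant.

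The crux is then the observation that each four-form in \eqref{N45} is manifestly symmetric under $x^i\leftrightarrow x_i$: each term has the block structure $dx_a\wedge dx_b\wedge dx^a\wedge dx^b$, a product of two two-forms, and since even-degree forms commute one has $dx_a\wedge dx_b\wedge dx^a\wedge dx^b = dx^a\wedge dx^b\wedge dx_a\wedge dx_b$. The scalar coefficients $(\mu_i-\mu_j)$ being untouched by the swap, the entire left-hand side of \eqref{N45} is invariant. Concretely, evaluating this single four-form in the two coordinate systems gives $G[\Xi]\,dx^1\wedge\cdots\wedge dx^4$ in the $x^i$ and $G[\tilde\Xi]\,dx_1\wedge\cdots\wedge dx_4$ in the $x_i$, where $G$ denotes the left-hand side of \eqref{N44}; converting one volume form into the other introduces the Hessian Jacobian, so that $G[\Xi] = \det(\Xi_{x^ix^k})\,G[\tilde\Xi]$ and hence $G[\Xi]=0$ if and only if $G[\tilde\Xi]=0$.

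I expect the only genuine obstacle to be the sign and orientation bookkeeping in the second step—matching the coefficients $(\mu_i-\mu_j)$ of \eqref{N45} to those of \eqref{N44} and tracking the sign in the volume-form conversion—since the conceptual content (commutation of the two two-form blocks together with the involutivity of \eqref{N46}) is immediate once the equation is written in the form \eqref{N45}.
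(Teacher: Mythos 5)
Your proposal is correct and takes essentially the same approach as the paper: its proof likewise observes that the generating relation $d\Xi = x_i dx^i$ becomes $d\tilde{\Xi} = x^i dx_i$ under \eqref{N46}, and that the four-form relation \eqref{N45}$_1$ is symmetric in the upper and lower indices (your commuting two-form blocks). Your additional checks---that \eqref{N45} expands to \eqref{N44} via $dx_i = \Xi_{x^ix^k}dx^k$, and the Hessian-Jacobian relation between the two volume forms---are correct elaborations of points the paper leaves implicit.
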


\begin{proof}
In terms of the new variables $\tilde{\Xi}$, $x_i$, relation \eqref{N45}$_2$ becomes
\bela{N47}
 d\tilde{\Xi} = x^idx_i
\ela
which proves the invariance since the remaining relation \eqref{N45}$_1$ is symmetric in the upper and lower indices.
\end{proof}

It is remarked that the general heavenly equation is also invariant under a ``partial'' Legendre transformation which interchanges any chosen number of corresponding variables carrying upper and lower indices, that is, the summation over $i$ in \eqref{N46} may be restricted to any subset of \high{$\{1,\ldots,4\}$} \high{with $\tilde{\Xi}$ depending on the associated appropriate variables}. Partial Legendre transformations are further discussed in Section 12.

\section{A pair of coinciding spectral parameters. The Husain-Park equation}

For any fixed spectral parameter $\lambda$, the general eigenfunction is a function of two functionally independent particular eigenfunctions. Hence, instead of demanding that all parameters $\lambda_i$ be distinct, it is also admissible to choose up to two pairs of coinciding spectral parameters. In this section, we consider the case of {\em one} pair of coinciding parameters, that is, $\lambda_1=\lambda_2$ without loss of generality. Thus, the task is to find the analogue of the parametrisation \eqref{N20} which resolves the vanishing divergence conditions \eqref{N19}. To this end, we observe that the defining equations \eqref{N19a} for the potential $\Theta$ are still meaningful and compatible as long as $(1,2)\neq(i,k)\neq (2,1)$. Accordingly, we may adopt the parametrisation of the generic case, that is,
\bela{N48}
 \omega_{ik} = (\lambda_k-\lambda_i)\Theta_{y^iy^k}
\ela
with the coefficients $\omega_{12}$ and $\omega_{21}=-\omega_{12}$ being excluded. The latter are determined by the remaining vanishing divergence conditions which read
\bela{N49}
  \del_{y^3}\omega_{12} = 0,\quad \del_{y^4}\omega_{12} = 0 
\ela
so that the vanishing Pfaffian condition \eqref{N22} becomes
\bela{N50}
 f_{12}\Theta_{y^3y^4} + \Theta_{y^2y^3}\Theta_{y^1y^4} - \Theta_{y^3y^1}\Theta_{y^2y^4} = 0,
\ela
where
\bela{N51}
 f_{12} = \frac{\high{\lambda_4-\lambda_3}}{(\lambda_1-\lambda_3)(\lambda_1-\lambda_4)}\omega_{12}(y^1,y^2).
\ela
Hence, on application of an appropriate member of the class of coordinate transformations $(y^1,y^2)\rightarrow\fb_{12}(y^1,y^2)$, the Husain-Park equation corresponding to $f_{12}=1$ is obtained (see \cite{JakimowiczTafel2006} and references therein). It is noted that this is consistent with the fact that any function of two eigenfunctions corresponding to the same spectral parameter constitutes another eigenfunction so that it is {\em a priori} known that the differential equation \eqref{N50} must be invariant under this class of coordinate transformations. Furthermore, it should be emphasised that rather than considering the coinciding pair $\lambda_1=\lambda_2$ and applying the above algorithm, one may apply a confluence limit of the type
\bela{N52}
  \Theta\rightarrow \Theta + \frac{g_{12}(y^1,y^2)}{\epsilon},\quad \lambda_2 = \lambda_1 + \epsilon,\quad \epsilon\rightarrow0
\ela
directly to the general heavenly equation \eqref{N23} to derive the Husain-Park equation in the ``gauge-invariant'' form \eqref{N50}.

Before we present concrete examples of how the above procedure may be applied, it is worth noting that the vector fields $X(\lambda_1)$ and $Y(\lambda_1)$ are tangent to the coordinate surfaces $(y^1,y^2)=\mbox{const}$ since
\bela{N52a}
 X(\lambda_1)y^i =0,\quad Y(\lambda_1)y^i=0,\quad i=1,2.
\ela
Hence, the commutativity of $X(\lambda_1)$ and $Y(\lambda_1)$ guarantees the existence of a natural coordinate system $(z^1,z^2,y^1,y^2)$ defined by the additional relations
\bela{N52b}
  X(\lambda_1)z^1 = 1,\quad X(\lambda_1)z^2=0,\quad Y(\lambda_1)z^1 = 0,\quad Y(\lambda_1)z^2 = 1
\ela
in terms of which these two vector fields are ``straight'', that is, 
\bela{N52c}
  X(\lambda_1)=\del_{z^1},\quad Y(\lambda_1)=\del_{z^2}.
\ela
It is observed that even though, for any fixed spectral parameter, say, $\lambda_1$, the corresponding eigenfunction is an arbitrary function of two functionally independent particular eigenfunctions, the kernel of each of the vector fields $X(\lambda_1)$ and $Y(\lambda_1)$ is encoded in three functionally independent solutions of \high{$X(\lambda_1)z^{(X)}=0$ and $Y(\lambda_1)z^{(Y)}=0$ respectively. In fact, the relations \eqref{N52a} and \eqref{N52b} show that, in the above situation, $z^{(X)} = F(z^2,y^1,y^2)$ and $z^{(Y)} = G(z^1,y^1,y^2)$}. Hence, the introduction of the coordinate system $(z^1,z^2,y^1,y^2)$ constitutes a natural way of getting around the fact that it is impossible to have three or four coinciding spectral parameters in the formalism presented in this paper. The general relationship between the latter and the classical theory of ``straightening'' commuting vector fields using ``eigenfunctions'' (integrals) \cite{Straightening} is currently under investigation.

\subsection{Application to the general heavenly equation}

As a first illustration, we now show explicitly how the general heavenly equation \eqref{N41} may be mapped to the Husain-Park equation \eqref{N50}. Since it is natural to select the parameters
\bela{N53}
  \lambda_1=\lambda_2=\mu_2,\quad \lambda_3=\mu_3,\quad \lambda_4=\mu_4,
\ela
we may make the choice
\bela{N54}
  \phi^1 = \Xi_{x^2},\quad \phi^2 = x^2,\quad \phi^3 = x^3,\quad \phi^4 = x^4.
\ela
If we consider the case $\mu_1\rightarrow\infty$ and $\mu_2=0$ without loss of generality then the Lax pair for the general heavenly equation in the form
\bela{N55}
  (\mu_3 - \mu_4)\Xi_{x^1x^2}\Xi_{x^3x^4} = \mu_3\Xi_{x^2x^3}\Xi_{x^1x^4} - \mu_4\Xi_{x^1x^3}\Xi_{x^2x^4}
\ela
is given by \cite{Schief1996}
\bela{N56}
 \begin{split}
  \Phi_{x^3} & = \frac{1}{(\lambda-\mu_3)\Xi_{x^1x^2}}(\lambda\Xi_{x^1x^3}\Phi_{x^2} - \mu_3\Xi_{x^2x^3}\Phi_{x^1})\\
  \Phi_{x^4} & = \frac{1}{(\lambda-\mu_4)\Xi_{x^1x^2}}(\lambda\Xi_{x^1x^4}\Phi_{x^2} - \mu_4\Xi_{x^2x^4}\Phi_{x^1})
 \end{split}
\ela
so that
\bela{N57}
  A_2 = \del_{x^3} - \frac{\Xi_{x^1x^3}}{\Xi_{x^1x^2}}\del_{x^2},\quad
  A_4 = \del_{x^4} - \frac{\Xi_{x^1x^4}}{\Xi_{x^1x^2}}\del_{x^2}.
\ela
The latter vector fields are divergence free with respect to $f=\Xi_{x^1x^2}$ as pointed out at the end of Section 4. Accordingly, the entries \eqref{N30} of the skew-symmetric matrix $A = \tilde{A}$ (by virtue of $J = \Xi_{x^1x^2}$) read
\bela{N58}
 \bear{c}\dis
  A^{12} = \frac{\Xi_{x^2x^3}\Xi_{x^1x^4} - \Xi_{x^1x^3}\Xi_{x^2x^4}}{\Xi_{x^1x^2}},\quad
  A^{13} = \Xi_{x^2x^4} - \frac{\Xi_{x^1x^4}}{\Xi_{x^1x^2}}\Xi_{x^2x^2}\AS\dis
  A^{41} = \Xi_{x^2x^3} - \frac{\Xi_{x^1x^3}}{\Xi_{x^1x^2}}\Xi_{x^2x^2},\quad
  A^{32} = \frac{\Xi_{x^1x^4}}{\Xi_{x^1x^2}},\quad A^{24} = \frac{\Xi_{x^1x^3}}{\Xi_{x^1x^2}},\quad A^{43} = 1.
 \ear
\ela
On use of the identity
\bela{N59}
  \del_{x^1} = \Xi_{x^1x^2}\del_{y^1},
\ela
we therefore conclude that the connection between the potentials $\Xi$ and $\Theta$ encoded in \eqref{N40} may be formulated as
\bela{N60}
  \Theta_{y^1y^3} = -\frac{1}{2}\frac{\del_{y^1}\Xi_{x^3}}{\mu_3},\quad  
  \Theta_{y^1y^4} = -\frac{1}{2}\frac{\del_{y^1}\Xi_{x^4}}{\mu_4}
\ela
and similar expressions for the remaining mixed derivatives of $\Theta$ except for $\Theta_{y^1y^2}$. Hence, integration leads, without loss of generality, to the first-order relations
\bela{N61}
 \Theta_{y^3} = -\frac{1}{2}\frac{\Xi_{x^3}}{\mu_3},\quad \Theta_{y^4} = -\frac{1}{2}\frac{\Xi_{x^4}}{\mu_4}.
\ela
One may now directly verify that the above pair is compatible modulo the general heavenly equation \eqref{N55} and $\Theta$ indeed satisfies the Husain-Park equation \eqref{N50} with $f_{12} = (\mu_4^{-1}-\mu_3^{-1})/2$.

\subsection{Application to Pleba\'nski's first heavenly equation}

The connection between Pleba\'nski's first heavenly equation and the (elliptic) Husain-Park equation has been established explicitly in \cite{JakimowiczTafel2006}. Here, we demonstrate how this connection may be found algorithmically using our formalism. It is recalled (see Section 2) that the standard Lax pair for the first Pleba\'nski equation
\bela{N62}
 \Omega_{x^1x^3}\Omega_{x^2x^4} - \Omega_{x^2x^3}\Omega_{x^1x^4} = 1
\ela
reads
\bela{N63}
 \begin{split}
  \Phi_{x^3} & = \lambda(\Omega_{x^1x^3}\Phi_{x^2} - \Omega_{x^2x^3}\Phi_{x^1}) = \lambda A_2\Phi\\
  \Phi_{x^4} & = \lambda(\Omega_{x^1x^4}\Phi_{x^2} - \Omega_{x^2x^4}\Phi_{x^1}) = \lambda A_4\Phi.
 \end{split}
\ela
For a non-vanishing spectral parameter $\lambda$, this is equivalent to the pair
\bela{N64}
 \begin{split}
  \Phi_{x^1} & = \lambda^{-1}(\Omega_{x^1x^4}\Phi_{x^3} - \Omega_{x^1x^3}\Phi_{x^4})\\
  \Phi_{x^2} & = \lambda^{-1}(\Omega_{x^2x^4}\Phi_{x^3} - \Omega_{x^2x^3}\Phi_{x^4}).
 \end{split}
\ela

In the following, the most convenient specialisation of the parameters $\lambda_i$ and associated eigenfunctions $\phi^i$ for \highb{$i=1,2$} is given by 
\bela{N65}
 \lambda_1=\lambda_2=0,\quad \phi^1=x^1,\quad \phi^2 = x^2
\ela
\highb{with the remaining eigenfunctions $\phi^3$ and $\phi^4$ corresponding to the parameters $\lambda_3$ and $\lambda_4$ being arbitrary.} Accordingly, we obtain
\bela{N66}
 \bear{c}
  \tilde{A}^{31} = \phi^3_{x^2},\quad \tilde{A}^{41} = \phi^4_{x^2},\quad \tilde{A}^{23} = \phi^3_{x^1},\quad \tilde{A}^{24}=\phi^4_{x^1}\as
  \tilde{A}^{21} = 1,\quad \tilde{A}^{43} = \phi^3_{x^1}\phi^4_{x^2} - \phi^3_{x^2}\phi^4_{x^1}.
 \ear
\ela
Moreover, inversion of the identities
\bela{N67}
  \del_{x^3} = \phi^3_{x^3}\del_{y^3} + \phi^4_{x^3}\del_{y^4},\quad
  \del_{x^4} = \phi^3_{x^4}\del_{y^3} + \phi^4_{x^4}\del_{y^4}
\ela
yields
\bela{N68}
  \del_{y^3} = \frac{\phi^4_{x^4}\del_{x^3} - \phi^4_{x^3}\del_{x^4}}{J},\quad
  \del_{y^4} = \frac{\phi^3_{x^3}\del_{x^4} - \phi^3_{x^4}\del_{x^3}}{J}
\ela
with $J = \phi^3_{x^3}\phi^4_{x^4} - \phi^3_{x^4}\phi^4_{x^3}$ so that the relations \eqref{N66}$_{1,2,3,4}$ become
\bela{N69}
 \tilde{A}^{23}= \frac{J}{\lambda_3}\del_{y^4}\Omega_{x^1},\quad
 \tilde{A}^{31} = \frac{J}{\lambda_3}\del_{y^4}\Omega_{x^2},\quad
 \tilde{A}^{42} = \frac{J}{\lambda_4}\del_{y^3}\Omega_{x^1},\quad
 \tilde{A}^{14}= \frac{J}{\lambda_4}\del_{y^3}\Omega_{x^2} 
\ela
by virtue of the Lax pair \eqref{N64}. Now, since $f=1$ so that \high{$A^{lm}=\tilde{A}^{lm}/J$}, four of the relations \eqref{N40} may be integrated \high{to obtain
\bela{N70}
 \Theta_{y^1} = \frac{1}{2\lambda_3\lambda_4}[\Omega_{x^1} + p(x^1,x^2)],\quad \Theta_{y^2} = \frac{1}{2\lambda_3\lambda_4}[\Omega_{x^2} + q(x^1,x^2)],
\ela
where $p(y^1,y^2)=p(x^1,x^2)$ and $q(y^1,y^2)=q(x^1,x^2)$ are functions of integration. By construction, the above pair must be compatible modulo the first Pleba\'nski equation \eqref{N62} and the Lax pair \eqref{N63} for $\phi^3$ and $\phi^4$ corresponding to the parameters $\lambda_3$ and $\lambda_4$. Indeed, cross-differentiation produces the relation
\bela{N70a}
 \lambda_3\lambda_4(p_{x^2} - q_{x^1}) = \lambda_3 + \lambda_4.
\ela
Hence, without loss of generality, we may choose
\bela{N70b}
  p = \frac{\lambda_3+\lambda_4}{2\lambda_3\lambda_4}x^2,\quad
  q = -\frac{\lambda_3+\lambda_4}{2\lambda_3\lambda_4}x^1.
\ela
Finally, the remaining compatible relation \eqref{N40}$_{i=3,k=4}$, namely
\bela{N71}
  \Theta_{y^3y^4} = \frac{1}{2J(\lambda_3-\lambda_4)},
\ela
guarantees that $\Theta$ is a solution of the Husain-Park equation \highb{\eqref{N50}} for $f_{12}=(\lambda_3-\lambda_4)/2\lambda_3^2\lambda_4^2$.

It is evident that the choice $\lambda_4=-\lambda_3$ is privileged since, in this case, the pair \eqref{N70} simplifies to
\bela{N71a}
 \Theta_{y^1} = -\frac{1}{2\lambda_3^2}\Omega_{x^1},\quad \Theta_{y^2} = -\frac{1}{2\lambda_3^2}\Omega_{x^2}\highb{\bullet}
\ela
The latter represents the analogue of the relations derived in \cite{JakimowiczTafel2006} for the elliptic Husain-Park equation. Indeed, if one sets aside the ``normalisation'' \eqref{N71} then $\Theta$ defined by the compatible pair \eqref{N71a} constitutes a solution of the Husain-Park equation \eqref{N50} modulo a suitable gauge transformation of the form $\Theta\rightarrow\Theta + f_{34}(y^3,y^4)$.} 

\section{Two pairs of coinciding spectral parameters. Pleba\'nski's first heavenly equation}

Here, we consider the case of {\em two} pairs of coinciding parameters, say, $\lambda_1=\lambda_2$ and $\lambda_3=\lambda_4$. This case can be dealt with in the same manner as before, leading to the parametrisation
\bela{N72}
  \omega_{ik} = (\lambda_k - \lambda_i)\Theta_{y^iy^k},\quad \omega_{12} = \omega_{12}(y^1,y^2),\quad \omega_{34} = \omega_{34}(y^3,y^4),
\ela
wherein $(i,k)\not\in\{(1,2),(2,1),(3,4),(4,3)\}$. The Pfaffian condition \eqref{N22} then becomes
\bela{N73}
  f_{12}f_{34} + \Theta_{y^2y^3}\Theta_{y^1y^4} - \Theta_{y^3y^1}\Theta_{y^2y^4} = 0
\ela
with
\bela{N74}
  f_{12} = \frac{\omega_{12}(y^1,y^2)}{\lambda_1-\lambda_3},\quad  f_{34} = \frac{\omega_{34}(y^3,y^4)}{\lambda_1-\lambda_3},
\ela
which, on application of a suitable coordinate transformation of the form $(y^1,y^2)\rightarrow\fb_{12}(y^1,y^2)$ and $(y^3,y^4)\rightarrow\fb_{34}(y^3,y^4)$, constitutes Pleba\'nski's first heavenly equation corresponding to $f_{12}=f_{34}=1$. Once again, it is remarked in passing that a confluence limit of the type
\bela{N75}
  \Theta\rightarrow \Theta + \frac{g_{12}(y^1,y^2)}{\epsilon} +  \frac{g_{34}(y^3,y^4)}{\epsilon},\quad \lambda_2 = \lambda_1 + \epsilon,\quad\lambda_4 = \lambda_3 + \epsilon, \quad \epsilon\rightarrow0
\ela
directly reduces the general heavenly equation \eqref{N23} to the first heavenly equation in the ``gauge-invariant'' form \eqref{N73}.

As in the previous section, one may now map any of the known heavenly equations to Pleba\'nski's first heavenly equation. For instance, application of our formalism to the Husain-Park equation results in the ``inverse'' of the transformation from the first heavenly equation to the Husain-Park equation (cf.\ \cite{JakimowiczTafel2006}) derived in the previous section. Here, we focus on the application to Pleba\'nski's second heavenly equation and the general heavenly equation.

\subsection{Application to Pleba\'nski's second heavenly equation}

The classical link \cite{Plebanski1975,DoubrovFerapontov2010} between Pleba\'nski's second heavenly equation
\bela{A1}
  \Lambda_{x^1x^3} + \Lambda_{x^2x^4} + \Lambda_{x^1x^1}\Lambda_{x^2x^2} - \Lambda_{x^1x^2}^2 = 0
\ela
and the first heavenly equation may be formulated in terms of the two-form
\bela{A2}
 \highb{\hat{\Omega}} = (dx^1 - \Lambda_{x^2x^2}dx^3 + \Lambda_{x^1x^2}dx^4)\wedge(\high{dx^2 +  \Lambda_{x^1x^2}dx^3 - \Lambda_{x^1x^1}dx^4})
\ela
which has the properties
\bela{A3}
  \highb{d\hat{\Omega} = 0,\quad \hat{\Omega}\wedge\hat{\Omega} = 0.}
\ela
Accordingly, Darboux's theorem guarantees the existence of functions $y^1$ and $y^2$ such that
\bela{A4}
  \highb{\hat{\Omega} = dy^1\wedge dy^2.}
\ela
Comparison with \eqref{A2} shows that there exist expansions of the form
\bela{A5}
 \begin{split}
  dx^1 - \Lambda_{x^2x^2}dx^3 + \Lambda_{x^1x^2}dx^4 & = u_{13}dy^1 + u_{23}dy^2\\
  \high{dx^2 + \Lambda_{x^1x^2}dx^3 - \Lambda_{x^1x^1}dx^4} & = u_{14}dy^1 + u_{24}dy^2
 \end{split}
\ela
for some functions $u_{13},u_{24},u_{23}$ and $u_{14}$ subject to
\bela{A6}
  u_{13}u_{24} - u_{23}u_{14} = 1.
\ela
If we now regard $y^1,y^2$ and $y^3=x^3,\,y^4=x^4$ as independent variables then the expansions \eqref{A5} imply that $x^1_{y^4} = x^2_{y^3}$ so that there exists a potential $\Theta$ defined according to
\bela{A7}
  x^1 = \Theta_{y^3},\quad x^2 = \Theta_{y^4}.
\ela
Hence, \eqref{A5} gives rise to the parametrisation
\bela{A8}
  u_{ik} = \Theta_{y^iy^k}
\ela
which, in turn, reveals that the algebraic relation \eqref{A6} encodes Pleba\'nski's first heavenly equation
\bela{A9}
  \Theta_{y^1y^3}\Theta_{y^2y^4} - \Theta_{y^2y^3}\Theta_{y^1y^4} = 1.
\ela

The connection with the present formalism is now uncovered by investigating the nature of the coordinates $y^i$. Thus, if we solve \eqref{A5} for $dy^1$ and $dy^2$ then we obtain
\bela{A10}
 \begin{split}
  dy^1 & = u_{24}(dx^1 - \Lambda_{x^2x^2}dx^3 + \Lambda_{x^1x^2}dx^4) - u_{23}(dx^2 + \Lambda_{x^1x^2}dx^3 - \Lambda_{x^1x^1}dx^4)\\
  dy^2 & = u_{13}(\high{dx^2} + \Lambda_{x^1x^2}dx^3 - \Lambda_{x^1x^1}dx^4) - u_{14}(\high{dx^1} - \Lambda_{x^2x^2}dx^3 + \Lambda_{x^1x^2}dx^4).
 \end{split}
\ela
The latter implies that
\bela{A11}
  \highb{u_{13} = y^2_{x^2},\quad u_{24} = y^1_{x^1}},\quad u_{23} = -y^1_{x^2},\quad \high{u_{14} = -y^2_{x^1}}
\ela
so that \eqref{A10} reduces to
\bela{A12}
 \begin{split}
  y^i_{x^3} = \Lambda_{x^1x^2}y^i_{x^2} - \Lambda_{x^2x^2}y^i_{x^1},\\
  y^i_{x^4} = \Lambda_{x^1x^2}y^i_{x^1} - \Lambda_{x^1x^1}y^i_{x^2},
 \end{split}\quad i=1,2.
\ela
Hence, comparison with the Lax pair \cite{DoubrovFerapontov2010}
\bela{A13}
 \begin{aligned}
  X(\lambda)\Phi &= 0,\quad &X(\lambda) &= \del_{x^3} - \Lambda_{x^1x^2}\del_{x^2} + \Lambda_{x^2x^2}\del_{x^1} - \lambda\del_{x^2}\\ 
  Y(\lambda)\Phi &= 0,\quad &Y(\lambda) &= \del_{x^4} - \Lambda_{x^1x^2}\del_{x^1} + \Lambda_{x^1x^1}\del_{x^2} + \lambda\del_{x^1}
 \end{aligned}
\ela
for the second heavenly equation \eqref{A1} shows that the pairs $y^1,y^2$ and $y^3,y^4$ constitute eigenfunctions for $\lambda = 0$ and $\lambda\rightarrow\infty$ respectively. 

It is evident that our formalism is directly applicable even if one or two parameters $\lambda_i$ vanish or, due to the symmetry $\lambda\rightarrow\lambda^{-1}$, one or two parameters tend to infinity. If vanishing and infinite parameters are simultaneously present then the situation is more subtle but it is easy to verify that the formalism also applies {\em mutatis mutandis} in this case. \highb{In the current situation, the second-order relations between the potential $\Theta$ and the original quantities associated with the second heavenly equation are obtained by eliminating $u_{ik}$ between \eqref{A8} and \eqref{A11}}. Accordingly, the link between the two heavenly equations presented in Pleba\'nski's pioneering work \cite{Plebanski1975} may be regarded as a particular application of the scheme presented here.

\subsection{Application to the general heavenly equation}

Since the connection between the general heavenly equation and the first Pleba\'nski equation constitutes the basis of Section 11 which, in turn, \high{justifies} the reasoning employed in Section 10, we now derive the differential relations between the potentials satisfying those two equations. Thus, in order to map the general heavenly equation \eqref{N55} to the first heavenly equation, it is convenient to choose the eigenfunctions
\bela{A14}
 \phi^1 = \Xi_{x^3},\quad \phi^2 = x^3,\quad \phi^3 = \Xi_{x^4},\quad \phi^4 = x^4
\ela
corresponding to the parameters
\bela{A15}
  \lambda_1 = \lambda_2 = \mu_3,\quad \lambda_3 = \lambda_4 = \mu_4.
\ela
Then, evaluation of \eqref{N30} for the vector fields $A_2$ and $A_4$ given by \eqref{N57} results in
\bela{A16}
 \begin{split}
  \tilde{A}^{13} & = \Big(\Xi_{x^3x^4} - \frac{\Xi_{x^1x^4}}{\Xi_{x^1x^2}}\Xi_{x^2x^3}\Big)
                    \Big(\Xi_{x^3x^4} - \frac{\Xi_{x^1x^3}}{\Xi_{x^1x^2}}\Xi_{x^2x^4}\Big)\\
              & -  \Big(\Xi_{x^3x^3} - \frac{\Xi_{x^1x^3}}{\Xi_{x^1x^2}}\Xi_{x^2x^3}\Big)
                    \Big(\Xi_{x^4x^4} - \frac{\Xi_{x^1x^4}}{\Xi_{x^1x^2}}\Xi_{x^2x^4}\Big)\\
  \tilde{A}^{41} & = \Big(\Xi_{x^3x^3} - \frac{\Xi_{x^1x^3}}{\Xi_{x^1x^2}}\Xi_{x^2x^3}\Big),\quad \tilde{A}^{42} = 1\\
  \tilde{A}^{32} & = \Big(\Xi_{x^4x^4} - \frac{\Xi_{x^1x^4}}{\Xi_{x^1x^2}}\Xi_{x^2x^4}\Big).
 \end{split}
\ela
By construction, the mixed derivatives
\bela{A17}
 \begin{aligned}
\Theta_{y^1y^3} &= \frac{1}{2}\frac{A^{42}}{\mu_4-\mu_3},\quad &\Theta_{y^2y^4} &= \frac{1}{2}\frac{A^{31}}{\mu_4-\mu_3}\\
\Theta_{y^2y^3} &= \frac{1}{2}\frac{A^{14}}{\mu_4-\mu_3},\quad &\Theta_{y^1y^4} &= \frac{1}{2}\frac{A^{23}}{\mu_4-\mu_3},
 \end{aligned}
\ela
where
\bela{A18}
  A^{ik} = \frac{f}{J}\tilde{A}^{ik},\quad f = \Xi_{x^1x^2},\quad J = \Xi_{x^2x^3}\Xi_{x^1x^4} - \Xi_{x^1x^3}\Xi_{x^2x^4},
\ela
are compatible modulo the general heavenly equation \eqref{N55} and the potential $\Theta$ obeys Pleba\'nski's first heavenly equation in the form
\bela{A19}
  \Theta_{y^1y^3}\Theta_{y^2y^4} - \Theta_{y^2y^3}\Theta_{y^1y^4} = \sigma,\qquad \sigma = - \frac{1}{4}\frac{\mu_3\mu_4}{(\mu_4-\mu_3)^4}
\ela
as required. 

As indicated above, the relations \eqref{A17} provide a key link in the remaining discussion. In order to motivate the content of Sections 10 and 11, we first review an important known fact in the context of the general heavenly equation. \highb{Before we do so, we may summarise the key results of Sections 4, 7 and 8 in the following table.}
\begin{center}
    \begin{tabular}{| p{3.5cm} | p{4cm} |}
    \hline
    \# of pairs of coinciding parameters $\lambda_i$ & Canonical form of the self-dual Einstein equations\\ \hline\hline
      \qquad\qquad\, 0& General heavenly equation\\ \hline
      \qquad\qquad\, 1 & Husain-Park equation\\
    \hline
   \qquad\qquad\, 2 & First heavenly equation\\
   \hline
    \end{tabular}
\end{center}

\section{The dispersionless Hirota equation. Einstein-Weyl geometry}

The Jones-Tod procedure \cite{JonesTod1985} provides a connection between \high{four-dimensional spacetimes with anti-self-dual Weyl tensor and a conformal Killing vector} and three-dimensional Einstein-Weyl geometries. In particular, in \cite{DunajskiKrynski2014}, it has been shown how the dispersionless Hirota equation (see, e.g., \cite{Krynski2018}) simultaneously gives rise to Einstein-Weyl geometries and a particular class of \high{(anti-)self-dual Einstein spaces}. Here, we discuss this observation in connection with the general heavenly equation with a view to the generalisation presented in Section~10.  

The metric of self-dual Einstein spaces governed by the general heavenly equation
\bela{Z1}
 (\mu_3 - \mu_4)\Xi_{x^1x^2}\Xi_{x^3x^4} = \mu_3\Xi_{x^2x^3}\Xi_{x^1x^4} - \mu_4\Xi_{x^1x^3}\Xi_{x^2x^4}
\ela
with associated Lax representation
\bela{Z2}
 \begin{split}
  \Phi_{x^3} & = \frac{1}{(\lambda-\mu_3)\Xi_{x^1x^2}}(\lambda\Xi_{x^1x^3}\Phi_{x^2} - \mu_3\Xi_{x^2x^3}\Phi_{x^1})\\
  \Phi_{x^4} & = \frac{1}{(\lambda-\mu_4)\Xi_{x^1x^2}}(\lambda\Xi_{x^1x^4}\Phi_{x^2} - \mu_4\Xi_{x^2x^4}\Phi_{x^1})
 \end{split}
\ela
is known to be given by \cite{Schief1996}
\bela{Z3}
  g = q^{-1}[\Xi_{x^1x^2}\Xi_{x^1x^3}\Xi_{x^1x^4}{(dx^1)}^2 + \Xi_{x^1x^2}(\Xi_{x^1x^3}\Xi_{x^2x^4} + \Xi_{x^1x^4}\Xi_{x^2x^3}) dx^1dx^2 + \cdots],
\ela
where $q=f_{1234}$ and
\bela{Z4}
 f_{iklm} = \Xi_{x^ix^k}\Xi_{x^lx^m} - \Xi_{x^ix^l}\Xi_{x^kx^m}
\ela
for distinct indices $i,k,l,m$. Indeed, one may directly verify that $R_{ik}=0$ modulo the general heavenly equation \eqref{Z1}. It is noted that there exist only three essentially different quantities $f_{iklm}$ and their ratios are constant due to the structure of the general heavenly equation. Hence, up to an irrelevant constant factor, the metric \eqref{Z3} is completely symmetric in the indices.

We now single out a coordinate, say, $x^4$ and split the metric \eqref{Z3} into a ``three-dimensional'' metric and a ``perfect square'' according to
\bela{Z5}
  g = -\frac{f_{1234}}{4}h + \frac{\Xi_{x^1x^4}\Xi_{x^2x^4}\Xi_{x^3x^4}}{f_{1234}}(dx^4 + \eta)^2,
\ela
where
\bela{Z6}
 \begin{split}
  h & = \frac{\Xi_{x^1x^4}}{\Xi_{x^2x^4}\Xi_{x^3x^4}}{(dx^1)}^2
     + \alpha^2\frac{\Xi_{x^2x^4}}{\Xi_{x^1x^4}\Xi_{x^3x^4}}{(dx^2)}^2
     + \beta^2\frac{\Xi_{x^3x^4}}{\Xi_{x^1x^4}\Xi_{x^2x^4}}{(dx^3)}^2\\
   & + \frac{2\alpha}{\Xi_{x^3x^4}}dx^1dx^2
      + \frac{2\beta}{\Xi_{x^2x^4}}dx^1dx^3
      - \frac{2\alpha\beta}{\Xi_{x^1x^4}}dx^2dx^3
 \end{split}
\ela
with the constants
\bela{Z7}
 \alpha = \frac{f_{1243}}{f_{1234}},\quad \beta = \frac{f_{1432}}{f_{1234}}
\ela
and
\bela{Z9}
 \eta = \frac{1}{2}\left(\frac{\Xi_{x^1x^2}}{\Xi_{x^2x^4}}+\frac{\Xi_{x^1x^3}}{\Xi_{x^3x^4}}\right)dx^1
        + \frac{1}{2}\left(\frac{\Xi_{x^1x^2}}{\Xi_{x^1x^4}}+\frac{\Xi_{x^2x^3}}{\Xi_{x^3x^4}}\right)dx^2
        + \frac{1}{2}\left(\frac{\Xi_{x^1x^3}}{\Xi_{x^1x^4}}+\frac{\Xi_{x^2x^3}}{\Xi_{x^2x^4}}\right)dx^3. 
\ela
Furthermore, we consider the admissible reduction
\bela{Z10}
  \Xi = f(x^4)\omega(x^1,x^2,x^3)
\ela
which specialises the general heavenly equation \eqref{Z1} to the dispersionless Hirota equation
\bela{Z11}
  (\mu_3 - \mu_4)\omega_{x^1x^2}\omega_{x^3} - \mu_3\omega_{x^2x^3}\omega_{x^1} + \mu_4\omega_{x^1x^3}\omega_{x^2} = 0.
\ela
Since, the dependence of the metric \eqref{Z3} on $x^4$ is now merely encoded in the overall factor \high{$f(x^4)$}, it is evident that the reduction \eqref{Z10} leads to self-dual Einstein spaces admitting a homothetic Killing vector. Moreover, we obtain
\bela{Z12}
 \begin{split}
  h & \sim \frac{\omega_{x^1}}{\omega_{x^2}\omega_{x^3}}{(dx^1)}^2
     + \alpha^2\frac{\omega_{x^2}}{\omega_{x^1}\omega_{x^3}}{(dx^2)}^2
     + \beta^2\frac{\omega_{x^3}}{\omega_{x^1}\omega_{x^2}}{(dx^3)}^2\\
   & + \frac{2\alpha}{\omega_{x^3}}dx^1dx^2
      + \frac{2\beta}{\omega_{x^2}}dx^1dx^3
      - \frac{2\alpha\beta}{\omega_{x^1}}dx^2dx^3,
 \end{split}
\ela
up to an irrelevant factor depending on $x^4$, which is precisely the metric governing the Einstein-Weyl geometry associated with the dispersionless Hirota equation \cite{DunajskiKrynski2014}.

\section{A dispersionless Hirota system. Self-dual Einstein\newline spaces not admitting conformal Killing vectors}

A non-trivial reduction of the general heavenly equation is obtained by matching its scaling symmetry $\Xi\del_{\Xi}$ with the symmetry $\Phi\del_{\Xi}$. The fact that any eigenfunction of the general heavenly equation constitutes a symmetry of the general heavenly equation was first observed in \cite{Sergyeyev2017} and extends to its 4+4-dimensional version (TED equation) \cite{KonopelchenkoSchief2019} as further discussed in Section 12. Thus, if we set $\Phi=\Xi$ then the Lax pair \eqref{Z2} gives rise to the pair of dispersionless Hirota equations
\bela{Z13}
 \begin{split}
  (\lambda-\mu_3)\Xi_{x^3}\Xi_{x^1x^2} - \lambda\Xi_{x^2}\Xi_{x^1x^3} + \mu_3\Xi_{x^1}\Xi_{x^2x^3} &= 0\\
  (\lambda-\mu_4)\Xi_{x^4}\Xi_{x^1x^2} - \lambda\Xi_{x^2}\Xi_{x^1x^4} + \mu_4\Xi_{x^1}\Xi_{x^2x^4} & = 0
 \end{split}
\ela
which is, by construction, compatible with the general heavenly equation \eqref{Z1}. Moreover, the pair of dispersionless Hirota equations
\bela{Z14}
 \begin{split}
 (\mu_3-\mu_4)\Xi_{x^1}\Xi_{x^3x^4} + (\lambda-\mu_3)\Xi_{x^3}\Xi_{x^1x^4} - (\lambda-\mu_4)\Xi_{x^4}\Xi_{x^1x^3} & = 0\\
 \lambda(\mu_3-\mu_4)\Xi_{x^2}\Xi_{x^3x^4} + \mu_4(\lambda-\mu_3)\Xi_{x^3}\Xi_{x^2x^4} - \mu_3(\lambda-\mu_4)\Xi_{x^4}\Xi_{x^2x^3} & = 0
 \end{split}
\ela
is an algebraic consequence of the dispersionless Hirota equations \eqref{Z13} and the general heavenly equation. In fact, any three of the four dispersionless Hirota equations imply the fourth and the general heavenly equation. 
\high{It is noted that the dispersionless Hirota system is completely symmetric in its indices if the symmetry in the parameters $\mu_1=\infty$, $\mu_2=0$, $\mu_3$, $\mu_4$ is restored by means of a suitable fractional linear transformation of the parameters, leading to the fully symmetric form \highb{\eqref{N41}} of the general heavenly equation.} The compatibility of copies of dispersionless Hirota equations was first observed in \cite{Krynski2018} and is a direct consequence of the multi-dimensional consistency of the general heavenly equation \cite{Bogdanov2015} or, more generally, the 4+4-dimensional TED equation \cite{KonopelchenkoSchief2019}. It is also emphasised that the above dispersionless Hirota system is invariant under $\Xi\rightarrow F(\Xi)$. Even though this invariance may be proven directly, it is a consequence of the fact that any function of an eigenfunction of the general heavenly equation constitutes another eigenfunction. 

\begin{remark}
If we impose the constraint \eqref{Z10} and make the choice $\lambda=\mu_4$ then the dispersionless Hirota system \eqref{Z13}, \eqref{Z14} reduces to the dispersionless Hirota equation \eqref{Z11}. Hence, remarkably, the Einstein-Weyl geometry associated with the dispersionless Hirota equation is captured as a special case by the class of self-dual Einstein spaces governed by the eigenfunction symmetry reduction leading to the dispersionless Hirota system. In fact, the dispersionless Hirota system specialises to the dispersionless Hirota equation if any of the four constraints
\bela{Z14a}
  \Xi = f(x^i)\omega(x^k,x^l,x^m),\quad \lambda = \mu_i,
\ela
where the indices $i,k,l,m$ are distinct, is imposed. Accordingly, the Einstein-Weyl geometry discussed in Section 9 is encoded in four different ways in the self-dual Einstein spaces examined in this section so that, in this sense, the algebraic multi-dimensional consistency of the dispersionless Hirota equation has its geometric counterpart in the ``multi-dimensional consistency'' of its associated Einstein-Weyl geometry. The exact nature of this geometric property
is the subject of ongoing research. 
\end{remark}

\subsection{The symmetry \boldmath $\Xi\rightarrow F(\Xi)$}

In the preceding, it has been demonstrated that the four-dimensional general heavenly equation admits a decomposition into four compatible three-dimensional dispersionless Hirota equations (of which only three are independent). It is therefore natural to inquire as to whether the solutions of the general heavenly equation obtained in this manner are genuinely four-dimensional in the sense that the corresponding self-dual Einstein spaces do not admit conformal Killing symmetries. Furthermore, it is desirable to show that the symmetry $\Xi\rightarrow F(\Xi)$ really acts non-trivially on the metric. At first glance, this appears to be likely due to the appearance of the function $F$ in the transformed metric. We begin by addressing the latter problem.

We first observe that
\bela{Z15}
 \Xi = z^1 + f(z^2),\quad z^1 = \sum_{i=1}^4 x^i,\quad z^2 = \sum_{i=1}^4\alpha_ix^i
\ela
constitutes a trivial solution of the general heavenly equation \eqref{Z1} and, in order for it to satisfy the pair \eqref{Z13} of dispersionless Hirota equations, the constants $\alpha_i$ and $\mu_k$ must be related by
\bela{Z16}
  \mu_k = \lambda\frac{\alpha_1(\alpha_2-\alpha_k)}{\alpha_2(\alpha_1-\alpha_k)}, \quad k=3,4.
\ela
However, this solution does not correspond to a viable metric since $f_{1234}=0$. This may be rectified by boosting the solution using the symmetry $\Xi\rightarrow F(\Xi)$ to obtain
\bela{Z17}
  \Xi = F(z^1 + f(z^2)).
\ela
\highb{Indeed, the condition for non-vanishing $f_{1234}$ is given by \mbox{$(\alpha_2-\alpha_3)(\alpha_1-\alpha_4)F^{\prime\prime}(u)f^{\prime\prime}(v)\neq0$}.} For example, the solution
\bela{Z18}
  \Xi = e^{z^1}\cosh z^2,
\ela
corresponding to $F(u)=e^u,\,f(v)=\ln\cosh v$, falls into this category, \highb{provided that \mbox{$\alpha_2\neq\alpha_3$} and $\alpha_1=\alpha_4$ being excluded due to \eqref{Z16}}. This solution generates a flat metric if, for instance, $\alpha_1=-\alpha_2=1$. More generally, for this particular choice of the function $f$ and the constants $\alpha_i$ \highb{but arbitrary function $F$ so that $\Xi = F(z^1 + \ln\cosh z^2)$, the components of the Riemann tensor which are not identically zero turn out to be proportional to}
\bela{Z19}
 R_{iklm} \sim \highb{\alpha_3\alpha_4}\Big({[F^{\prime}(u)]}^3F^{\prime\prime}(u)F^{\prime\prime\prime\prime}(u) - 3{[F^{\prime}(u)]}^3{[F^{\prime\prime\prime}(u)]}^2 + 2{[F^{\prime\prime}(u)]}^5\Big),
\ela
where $u = z^1 + \ln\cosh z^2$. Thus, for a four-parameter family of functions $F$, which includes $F(u)=e^u$, the metric is flat but, generically, the invariance $\Xi\rightarrow F(\Xi)$ maps the flat metric associated with the solution \eqref{Z18} to a non-flat metric. \highb{Here, we exclude the case $\alpha_3\alpha_4=0$}. This proves that this invariance of the dispersionless Hirota system is non-trivial at the level of the metric.

\subsection{Non-existence of conformal Killing vectors}

In this section, we prove the following theorem.

\begin{theorem}\label{killing}
 The generic metric of self-dual Einstein spaces governed by the dispersionless Hirota system \eqref{Z13}, \eqref{Z14} does not admit conformal Killing vectors (including homothetic Killing vectors and Killing vectors).
\end{theorem}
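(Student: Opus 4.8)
The plan is to prove that a generic metric in this family possesses no conformal Killing vectors by exploiting the decomposition of the general heavenly equation into the dispersionless Hirota system together with the connection to Pleba\'nski's first heavenly equation established in Section~8. The key structural fact I would rely on is that, as anticipated in Section~11, the eigenfunction symmetry constraint $\Phi=\Xi$ acts on the first heavenly equation to split it into three compatible differential equations; this decomposition furnishes an explicit, highly rigid normal form for the potential. A conformal Killing vector $K$ is a vector field satisfying $\mathcal{L}_K g = 2\varphi\, g$ for some function $\varphi$, where the homothetic ($\varphi=\mathrm{const}$) and proper Killing ($\varphi=0$) cases are included as specialisations. The strategy is to write down this conformal Killing equation in the privileged coordinates and show that, for \emph{generic} data, the overdetermined system it imposes forces $K=0$.

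First I would fix the coordinate system and metric representation. Using the link of Section~8.2, I would express the self-dual Einstein metric \eqref{Z3}, subject to the dispersionless Hirota constraints \eqref{Z13}, \eqref{Z14}, in terms of the first-heavenly potential $\Theta$ via the relations \eqref{A17}. The advantage is that the first heavenly equation \eqref{A9} (equivalently \eqref{A19}) together with the decomposition derived in Section~11 gives a small number of independent functional degrees of freedom, so the metric components become explicit functions of a few second derivatives of $\Theta$. Next I would impose the conformal Killing equation $\nabla_i K_k + \nabla_k K_i = 2\varphi\, g_{ik}$. Because the metric is self-dual and Ricci-flat, I would use the standard integrability machinery for conformal Killing vectors: differentiating the defining equation and using $R_{ik}=0$ together with the self-duality of the Weyl tensor yields prolonged relations that, in particular, constrain $\varphi$ to be (at most) affine along the flow and tie the antisymmetric part of $\nabla K$ to the curvature. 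The aim is to convert the first-order PDE system for $K$ into a finite-dimensional linear-algebraic problem at each point.

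The decisive step is to feed the explicit decomposition of Section~11 into these prolonged equations. Since the constraint reduces $\Theta$ to a form governed by three compatible equations, the nonvanishing curvature components — of the schematic type displayed in \eqref{Z19} — depend nontrivially on the free function(s), and a conformal Killing vector must preserve the curvature up to the appropriate conformal weight, i.e. satisfy $\mathcal{L}_K R_{iklm} = 2\varphi\, R_{iklm}$ plus the analogous conditions on its covariant derivatives. Evaluating these Lie-derivative conditions against the explicit curvature, I expect that genericity of the free function (the analogue of $F$ in \eqref{Z17}, whose third- and fourth-order derivatives enter \eqref{Z19}) makes the coefficients of $K^i$ and $\varphi$ in these algebraic relations linearly independent, so that the only solution is the trivial one $K^i=0$, $\varphi=0$. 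The word ``generic'' is essential here: for special functions (such as the four-parameter family yielding a flat metric, as noted after \eqref{Z19}) symmetries do exist, and the theorem is asserted only off this degenerate locus.

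\textbf{The main obstacle} I anticipate is controlling the prolongation bookkeeping cleanly enough to isolate the genericity condition. The conformal Killing system on a four-manifold is overdetermined but its full integrability closure involves the Weyl tensor and its first two covariant derivatives, and writing these out in the $\Theta$ coordinates produces lengthy expressions. The delicate part is not the existence of \emph{some} obstruction but showing that the obstruction is \emph{nontrivial for generic data}: one must exhibit a specific combination of the curvature and its derivatives whose dependence on the free function(s) is manifestly nondegenerate, so that demanding $\mathcal{L}_K(\text{curvature})$ to close forces all components of $K$ to vanish. I would organise the calculation so that the self-dual structure collapses the independent Weyl components to essentially one scalar (as \eqref{Z19} suggests), reducing the problem to a single scalar transport equation $K(\Psi)=2\varphi\,\Psi$ for the curvature scalar $\Psi$ plus the compatibility conditions from Section~11; the genericity hypothesis is then precisely the statement that $\Psi$ and its directional derivatives are functionally independent enough to annihilate $K$.
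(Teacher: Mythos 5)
Your setup (passing to the first-heavenly potential via Section~8.2 and using the Section~11 decomposition) coincides with the paper's, but the core of your argument --- ruling out conformal Killing vectors through curvature obstructions and prolongation --- is not the paper's route, and as outlined it contains a gap that would be fatal. First, the proposed collapse to ``a single scalar transport equation $K(\Psi)=2\varphi\,\Psi$'' rests on a misreading of \eqref{Z19}: that formula describes the curvature of one very special solution, namely the boosted trivial solution $\Xi=F(z^1+\ln\cosh z^2)$ of Section~10.1, not of a generic solution of the Hirota system; generically the self-dual Weyl tensor has five independent components, and in any case a single scalar equation cannot force the five unknowns $K^i,\varphi$ to vanish. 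Second, $\mathcal{L}_K R_{iklm}$ involves $\nabla K$, so the curvature conditions constrain the $1$-jet of $K$, not just $(K^i,\varphi)$ pointwise: what you actually face is the full conformal-Killing prolongation (an at most $15$-dimensional system), and the entire content of the theorem is to show that its solution space is zero for \emph{generic} Hirota data. Your proposal offers no mechanism for establishing this genericity --- it is precisely the step you defer (``I expect that genericity \dots makes the coefficients linearly independent''). Third, for proper conformal vectors (non-constant $\varphi$) the identity $\mathcal{L}_K R_{iklm}=2\varphi\,R_{iklm}$ is false; it holds only in the homothetic case, the correct formula acquiring terms in the second covariant derivatives of $\varphi$.

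The mechanism the paper uses, and which your proposal is missing, is concrete and avoids prolongation altogether. For a metric in first-heavenly form the (homothetic) Killing equations \eqref{Z32} can be integrated completely following Boyer and Finley, giving \eqref{Z33}--\eqref{Z36}: existence of a homothetic Killing vector is equivalent to the single non-local ``master equation'' $\Delta^h=0$ with $\Delta^h$ as in \eqref{Z35}, which is itself a symmetry reduction of the first heavenly equation. On the other hand, the dispersionless Hirota constraint is equivalent to the system \eqref{Z38}, whose solutions are determined by the free Cauchy data \eqref{Z39}, i.e.\ three functions of $(y^2,y^4)$. Since $\Delta^h_{y^2y^4}\in\mathcal{S}$, the master equation evaluated at $(y^1,y^3)=(y^1_0,y^3_0)$ imposes a nontrivial differential constraint on that free data, so generic data admit no homothetic (in particular no proper) Killing vector; proper conformal Killing vectors are then excluded by the standard theorem that a Ricci-flat metric admitting a proper conformal Killing vector also possesses a Killing vector. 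Without an analogue of this data-counting argument --- some explicit, finite list of constraints whose failure for generic data can be exhibited --- your approach stalls exactly at the point you identify as the main obstacle.
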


It turns out convenient to examine the above problem in the setting of Pleba\'nski's first heavenly equation
\bela{Z20}
  \Theta_{y^1y^3}\Theta_{y^2y^4} - \Theta_{y^2y^3}\Theta_{y^1y^4} = 1
\ela
with associated Lax pair 
\bela{Z21}
 \begin{split}
  \Phi_{y^3} & = \highb{\bar{\lambda}}(\Theta_{y^1y^3}\Phi_{y^2} - \Theta_{y^2y^3}\Phi_{y^1})\\
  \Phi_{y^4} & = \highb{\bar{\lambda}}(\Theta_{y^1y^4}\Phi_{y^2} - \Theta_{y^2y^4}\Phi_{y^1}).
 \end{split}
\ela
In Section 8, it has been demonstrated how the current formalism may be used to establish the link between the general and first heavenly equations. \highb{Extension of this link to the Lax pairs is readily shown to lead to the above standard Lax pair with $\bar{\lambda}$ being related to $\lambda$ by a fractional linear transformation (cf.\ \eqref{A28})}. In Section 11, this link is exploited to reveal how the decomposition of the general heavenly equation into the dispersionless Hirota system acts on the first heavenly equation. Remarkably, this decomposition corresponds to the assumption that
\bela{Z22}
 \Phi = \Theta - y^1\Theta_{y^1} - y^3\Theta_{y^3}
\ela
constitutes an eigenfunction of the first heavenly equation. Insertion of $\Phi$ as given by \eqref{Z22} into the Lax pair \eqref{Z21} leads to two constraints on $\Theta$ which are compatible with the Pleba\'nski equation \eqref{Z20}. This is discussed in Section 11 in more detail. In order to understand the nature of these constraints, we first briefly examine the symmetries of the first heavenly equation.

\subsubsection{Symmetries of the first heavenly equation}

Here, we consider symmetries of the first Pleba\'nski equation, that is, flows
\bela{Z23}
  \Theta_s = \Delta
\ela
which leave invariant the first heavenly equation \eqref{Z20}. Thus, differentiation of the latter with respect to the symmetry parameter $s$ produces the linear partial differential equation
\bela{Z24}
  \Theta_{y^2y^4}\Delta_{y^1y^3} + \Theta_{y^1y^3}\Delta_{y^2y^4} -  \high{\Theta_{y^1y^4}\Delta_{y^2y^3} - \Theta_{y^2y^3}\Delta_{y^1y^4} = 0}
\ela
for the function $\Delta$. Any solution of this differential equation gives rise to a symmetry of the first heavenly equation. For instance, it is evident that the latter is invariant under the scaling $\Theta\rightarrow\epsilon\Theta$, $y^i\rightarrow\epsilon^2y^i$ for any fixed $i$. This corresponds to
\bela{Z25}
  \Delta = \Theta - 2y^i\Theta_{y^i}.
\ela
One may directly verify that this is indeed a solution of the differential equation \eqref{Z24}. Another symmetry is given by
\bela{Z26}
  \Delta = \Phi
\ela
which may be seen immediately by cross-differentiating the Lax pair \eqref{Z21}, leading to
\bela{Z27}
 \Theta_{y^2y^4}\Phi_{y^1y^3} + \Theta_{y^1y^3}\Phi_{y^2y^4} -  \high{\Theta_{y^1y^4}\Phi_{y^2y^3} - \Theta_{y^2y^3}\Phi_{y^1y^4} = 0},
\ela
provided that \highb{$\bar{\lambda}\neq0$}. This symmetry has been recorded in \cite{MalykhNutkuSheftel2003} in connection with the notion of partner symmetries and is also a direct consequence of the eigenfunction symmetry of the 4+4-dimensional TED equation. We conclude that the admissible constraint \eqref{Z22} corresponds to a symmetry reduction of the first heavenly equation associated with an appropriate linear combination of the symmetries \eqref{Z25} and \eqref{Z26}.

Less obvious symmetries are provided by conservations laws associated with the first heavenly equation. Thus, the compatibility condition associated with the pair
\bela{Z28}
 \begin{split}
  \phi^a_{y^1} &= \Theta_{y^3}\Theta_{y^1y^4} - \Theta_{y^4}\Theta_{y^1y^3} + 2y^2\\
  \phi^a_{y^2} &= \Theta_{y^3}\Theta_{y^2y^4} - \Theta_{y^4}\Theta_{y^2y^3}
 \end{split}
\ela
is precisely the first heavenly equation \eqref{Z20} so that the existence of the potential $\phi^a$ is guaranteed. For reasons of symmetry, the existence of a potential $\phi^b$ defined according to
\bela{Z29}
 \begin{split}
  \phi^b_{y^3} &= \Theta_{y^2}\Theta_{y^1y^3} - \Theta_{y^1}\Theta_{y^2y^3} - 2y^4\\
  \phi^b_{y^4} &= \Theta_{y^2}\Theta_{y^1y^4} - \Theta_{y^1}\Theta_{y^2y^4}
 \end{split}
\ela
is likewise guaranteed. Differentiation of \eqref{Z28} and \eqref{Z29} with respect to $y^3,y^4$ and $y^1,y^2$ respectively then shows that the quantities
\bela{Z30}
  \Delta^a = \phi^a,\quad \Delta^b = \phi^b
\ela
satisfy the differential equation \eqref{Z24} and therefore constitute symmetries of the first heavenly equation.

\subsubsection{The homothetic Killing equations}

We will now investigate the existence of homothetic Killing vectors of the class of metrics associated with solutions of the dispersionless Hirota system and then address the (non-)admittance of more general proper conformal Killing vectors. It is recalled (Section 2) that, in terms of solutions of the first heavenly equation \eqref{Z20}, the metric of self-dual Einstein spaces adopts the form
\bela{Z31}
  g = 2\Theta_{y^1y^3}dy^1dy^3 + 2\Theta_{y^2y^3}dy^2dy^3 + 2\Theta_{y^1y^4}dy^1dy^4 + 2\Theta_{y^2y^4}dy^2dy^4.
\ela
A vector field $V^k$ constitutes a homothetic Killing vector field if its covariant representation $V_k= g_{kl}V^l$ satisfies Killing's equations \cite{SKMHH2003} given by
\bela{Z32}
  \nabla_{(i}V_{k)} = \chi g_{ik},
\ela
wherein $\nabla_i$ denotes the covariant derivative, the round brackets indicate the standard symmetrised sum and $\chi$ is a constant. If the latter vanishes then $V^k$ is a Killing vector giving rise to an isometry, while $\chi\neq0$ corresponds to a proper homothetic Killing vector. Based on a spinor formulation, it has been demonstrated in \cite{BoyerFinley1982} that the above Killing equations may be completely resolved. In fact, in the present formulation, one may directly integrate this set of linear partial differential equations to obtain
\bela{Z33}
 \begin{split}
  V_1 & = c^a(\Theta_{y^4}\Theta_{y^1y^3} - \Theta_{y^3}\Theta_{y^1y^4}) - a(y^3,y^4)\Theta_{y^1y^3} - b(y^3,y^4)\Theta_{y^1y^4}\\
  V_2 & = c^a(\Theta_{y^4}\Theta_{y^2y^3} - \Theta_{y^3}\Theta_{y^2y^4}) - a(y^3,y^4)\Theta_{y^2y^3} - b(y^3,y^4)\Theta_{y^2y^4}\\
  V_3 & = c^b(\Theta_{y^2}\Theta_{y^1y^3} - \Theta_{y^1}\Theta_{y^2y^3}) + c(y^1,y^2)\Theta_{y^1y^3} + d(y^1,y^2)\Theta_{y^2y^3}\\
  V_4 & = c^b(\Theta_{y^2}\Theta_{y^1y^4} - \Theta_{y^1}\Theta_{y^2y^4}) +  c(y^1,y^2)\Theta_{y^1y^4} + d(y^1,y^2)\Theta_{y^2y^4},
 \end{split}
\ela
where $c^a,c^b$ are constants, so that the components of the homothetic Killing vector are given by
\bela{Z33a}
 \begin{aligned}
 V^1 &= \high{c^b\Theta_{y^2}} + c(y^1,y^2),\,\, &V^2 &= -c^b\Theta_{y^1} + d(y^1,y^2)\\
 V^3 &= c^a\Theta_{y^4} - a(y^3,y^4),\,\, &V^4 &= -c^a\Theta_{y^3} - b(y^3,y^4).
 \end{aligned}
\ela
The Killing equations reduce to the ``master equation'' (in the terminology of \cite{BoyerFinley1982})
\bela{Z34}
  \Delta^h = 0,
\ela
where
\bela{Z35}
  \Delta^h = c^a\phi^a + a(y^3,y^4)\Theta_{y^3} + b(y^3,y^4)\Theta_{y^4} + 2\chi\Theta - c(y^1,y^2)\Theta_{y^1} - d(y^1,y^2)\Theta_{y^2} - c^b\phi^b
\ela
and $\phi^a$ and $\phi^b$ are the potentials defined by \eqref{Z28} and \eqref{Z29} respectively. Moreover, the functions $a,b,c,d$ are constrained by the linear equation
\bela{Z36}
  a_{y^3} + b_{y^4} + 4\chi - c_{y^1} - d_{y^2} = 0.
\ela
\highb{In fact, the latter constraint constitutes the trace $\nabla^iV_i = 4\chi$ of the Killing equations \eqref{Z32}.} It may be solved explicitly since the dependence of the pairs $a,b$ and $c,d$ on different variables implies the separation
\bela{Z37}
  a_{y^3} + b_{y^4} + \high{2\chi - \mu} = 0,\quad
  c_{y^1} + \high{d_{y^2} - 2\chi - \mu} = 0,
\ela
where $\mu$ is an arbitrary constant. In summary, the first heavenly equation \eqref{Z20} admits a homothetic Killing vector if and only if it is constrained by the (non-local) condition \eqref{Z34}.

It turns out that the constraint \eqref{Z34} is compatible with the first heavenly equation \eqref{Z20}. For instance, if $c^a=c^b=0$ then \highb{$
\Delta^h=0$} constitutes a first-order constraint which may be shown to lead to a three-dimensional reduction of the first heavenly equation. If $c^ac^b\neq0$ then the necessary conditions $\Delta^h_{y^1y^3}=\Delta^h_{y^1y^4}=\Delta^h_{y^2y^3}=\Delta^h_{y^2y^4}=0$ lead to four third-order differential constraints which are compatible with the first heavenly equation. Conversely, if those four constraints are satisfied then the functions of integration in the definitions \eqref{Z28} and \eqref{Z29} of the potentials $\phi^a$ and $\phi^b$ respectively may be chosen in such a manner that the non-local condition $\Delta^h=0$ is satisfied. The reason for the compatibility of the master equation $\Delta^h=0$ is readily revealed by examining the structure of $\Delta^h$ as given by \eqref{Z35}. Indeed, the discussion in the previous \high{subsection} has revealed that $\phi^a$ and $\phi^b$ are symmetries of the first heavenly equation and if the functions $a,b,c,d$ are suitable multiples of $y^3,y^4,y^1,y^2$ respectively then $\Delta^h|_{c^a=c^b=0}$ also constitutes a symmetry. In fact, in general, the condition \eqref{Z36} is exactly the condition which guarantees that $\Delta^h$ represents a symmetry of the first heavenly equation, \highb{that is, $\Delta^h$ satisfies the symmetry condition \eqref{Z24}, thereby justifying the notation $\Delta^h$.} Accordingly, the master equation $\Delta^h=0$ is nothing but a symmetry reduction of the first heavenly equation. 

In order to address the question as to whether the decomposition of the general heavenly equation into the dispersionless Hirota system corresponds to the assumption of a homothetic Killing vector, it is now required to determine whether the symmetry constraint \eqref{Z22} on the first heavenly equation constitutes a special case of the symmetry constraint $\Delta^h=0$. To this end, we first observe (as mentioned earlier, cf.\ Section 11.2) that insertion of $\Phi$ as given by \eqref{Z22} into the Lax pair \eqref{Z21} leads to two second-order constraints on the first heavenly equation. These constraints together with the first heavenly equation may then be formulated as a system of the type
\bela{Z38}
  \Theta_{y^1y^1} = F, \quad \Theta_{y^1y^3} = G,\quad \Theta_{y^3y^3} = H,\qquad F,G,H\in\mathcal{S}.
\ela
Here, the exact form of the functions $F,G$ and $H$ is not important. The key property of the above system is that these functions are contained in $\mathcal{S}$ which denotes the set of functions depending on $\Theta,\Theta_{y^1},\Theta_{y^3}$ and their derivatives of any order with respect to $y^2$ and $y^4$. The functions of this set may also depend explicitly on the independent variables $y^i$. By construction, the associated compatibility conditions are satisfied which, in turn, implies that, generically,  the solution of the triple \eqref{Z38} is determined by the Cauchy data
\bela{Z39}
  \Theta = f_0(y^2,y^4),\quad \Theta_{y^1} = f_1(y^2,y^4),\quad \Theta_{y^3} = f_3(y^2,y^4)\quad\mbox{at}\quad (y^1,y^3)=(y^1_0,y^3_0).
\ela
On the other hand, differentiation of $\Delta^h$ and use of \eqref{Z28} and \eqref{Z29} evidently yields \mbox{$\Delta^h_{y^2y^4}\in \mathcal{S}$}. Hence, $\Delta^h_{y^2y^4}=0$ evaluated at $(y^1,y^3)=(y^1_0,y^3_0)$ constitutes a differential constraint on the Cauchy data $f_k(y^2,y^4)$. Accordingly, generically, the constrained Pleba\'nski system \eqref{Z38}, which, as stated earlier, is equivalent to the dispersionless Hirota system, does not give rise to a homothetic Killing vector. Finally, it is well known \cite{Hall2004} that any metric satisfying Einstein's vacuum equations $R_{ik}=0$ which admits a proper conformal Killing vector also possesses a Killing vector. It is recalled that $V^k$ constitutes a proper conformal Killing vector if it satisfies Killing's equations \eqref{Z32} for a non-constant function $\chi$. This completes the proof of Theorem~\ref{killing}.

\section{Decomposition of the first heavenly equation}

In the previous section, it has been demonstrated that the dispersionless Hirota system \eqref{Z13}, \eqref{Z14} gives rise to self-dual Einstein spaces which, \high{generically}, do not admit any conformal Killing vectors. Since the dispersionless Hirota system has been obtained as a symmetry reduction of the general heavenly equation, it should be investigated whether the corresponding symmetry constraint may be formulated in an invariant manner so that it may be applied to any of the known heavenly equations. The results of this investigation will be presented elsewhere. Here, we briefly present the action of this symmetry constraint on the first heavenly equation since it has been exploited in the previous section.

\subsection{A first integral}

In Section 8.2, it has been shown that the potentials $\Xi$ and $\Theta$ obeying the general heavenly equation \eqref{N55} and the first heavenly equation \eqref{A19} respectively are linked by the second-order relations \eqref{A16}-\eqref{A18}. It turns out that an appropriate extension of these relations admits a first integral if the solutions of the general heavenly equation are restricted to the class of solutions satisfying the dispersionless Hirota system
\bela{A20}
 \begin{split}
  (\lambda-\mu_3)\Xi_{x^3}\Xi_{x^1x^2} - \lambda\Xi_{x^2}\Xi_{x^1x^3} + \mu_3\Xi_{x^1}\Xi_{x^2x^3} &= 0\\
  (\lambda-\mu_4)\Xi_{x^4}\Xi_{x^1x^2} - \lambda\Xi_{x^2}\Xi_{x^1x^4} + \mu_4\Xi_{x^1}\Xi_{x^2x^4} & = 0\\
 (\mu_3-\mu_4)\Xi_{x^1}\Xi_{x^3x^4} + (\lambda-\mu_3)\Xi_{x^3}\Xi_{x^1x^4} - (\lambda-\mu_4)\Xi_{x^4}\Xi_{x^1x^3} & = 0\\
 \lambda(\mu_3-\mu_4)\Xi_{x^2}\Xi_{x^3x^4} + \mu_4(\lambda-\mu_3)\Xi_{x^3}\Xi_{x^2x^4} - \mu_3(\lambda-\mu_4)\Xi_{x^4}\Xi_{x^2x^3} & = 0.
 \end{split}
\ela
Specifically, if we differentiate the ansatz
\bela{A21}
  \kappa\Xi = \Theta - y^1\Theta_{y^1} - y^3\Theta_{y^3}
\ela
with respect to $y^i$, where $\kappa$ is a constant, then we obtain the additional second derivatives 
\bela{A22}
 \begin{aligned}
 \Theta_{y^1y^1} &=-\frac{\kappa\Xi_{y^1}+y^3\Theta_{y^1y^3}}{y^1} ,\quad & \Theta_{y^1y^2} &=\frac{\Theta_{y^2} - \kappa\Xi_{y^2} - y^3\Theta_{y^2y^3}}{y^1}  \\
 \Theta_{y^3y^3} &=-\frac{\kappa\Xi_{y^3}+y^1\Theta_{y^1y^3}}{y^3},\quad & \Theta_{y^3y^4} &=\frac{\Theta_{y^4} - \kappa\Xi_{y^4} - y^1\highb{\Theta_{y^1y^4}}}{y^3}.
 \end{aligned}
\ela
It is evident that all terms of the right-hand sides of these relations except for $\Theta_{y^2}$ and $\Theta_{y^4}$ may be expressed entirely in terms of the potential $\Xi$ and the associated independent variables $x^i$. One may now directly verify (using computer algebra) that these relations are compatible \highb{with \eqref{A17}} modulo the dispersionless Hirota system \eqref{A20} provided that
\bela{A23}
 \kappa = -\frac{1}{2}\frac{\lambda}{(\lambda - \mu_3)(\lambda - \mu_4)}.
\ela
Hence, taking into account the invariance $\Xi\rightarrow \Xi + \mbox{const}$ of the general heavenly equation, the ansatz \eqref{A21} is equivalent to the system \eqref{A22} and may be interpreted as a first integral of the extended system \eqref{A17}, \eqref{A22}.

\subsection{Decomposition}

In order to motivate the first-order link \eqref{A21} between the potentials $\Xi$ and $\Theta$ in the case of the restricted class of solutions of the general heavenly equation governed by the dispersionless Hirota system, we note that since the latter is a result of the imposition of a symmetry constraint involving the eigenfunction and a scaling symmetry of the general heavenly equation, one expects to find a similar result if one formulates this symmetry constraint in terms of Pleba\'nski's first heavenly equation. In the previous section, we have demonstrated that the symmetry constraint
\bela{A24}
  \Phi = \Theta - y^1\Theta_{y^1} - y^3\Theta_{y^3}
\ela
on the first heavenly equation is admissible, where $\Phi$ obeys the Lax pair
\bela{A25}
 \begin{split}
  \Phi_{y^3} & = \highb{\bar{\lambda}}(\Theta_{y^1y^3}\Phi_{y^2} - \Theta_{y^2y^3}\Phi_{y^1})\\
  \Phi_{y^4} & = \highb{\bar{\lambda}}(\Theta_{y^1y^4}\Phi_{y^2} - \Theta_{y^2y^4}\Phi_{y^1})
 \end{split}
\ela
for the first heavenly equation which, in the current context, is of the form \eqref{A19}. Now, taking into account that $\Xi$ constitutes an eigenfunction if the general heavenly equation is specialised to the dispersionless Hirota system, the identification of the two eigenfunctions $\Phi$ and $\kappa\Xi$ leads to the ansatz \eqref{A21}.

Insertion of $\Phi$ as given by \eqref{A24} into the Lax pair \eqref{A25} leads to
\bela{A26}
 \begin{split}
 \Theta_{y^1y^1} & = \frac{(y^1\Theta_{y^1y^2} - \Theta_{y^2})\Theta_{y^1y^4} - (y^1\Theta_{y^1y^4} + y^3\Theta_{y^3y^4} - \Theta_{y^4})\highb{\bar{\lambda}}^{-1} - \sigma y^3}{y^1\Theta_{y^2y^4}} \\
 \Theta_{y^3y^3} & = \frac{(y^3\Theta_{y^3y^4} - \Theta_{y^4})\high{\Theta_{y^2y^3} + (y^1\Theta_{y^1y^2}} + y^3\Theta_{y^2y^3} - \Theta_{y^2})\sigma \highb{\bar{\lambda}} - \sigma y^1}{\highb{y^3}\Theta_{y^2y^4}} ,
 \end{split}
\ela
where we have exploited the Pleba\'nski equation \eqref{A19} formulated as
\bela{A27}
  \Theta_{y^1y^3} = \frac{\Theta_{y^2y^3}\Theta_{y^1y^4} + \sigma}{\Theta_{y^2y^4}}.
\ela
By construction, the system \eqref{A26}, \eqref{A27} is compatible. Finally, it may be verified that this system is satisfied if $\Theta$ is related to the potential $\Xi$ via the system \eqref{A17}, \eqref{A22} provided that
\bela{A28}
 \highb{\bar{\lambda}} = 2\frac{(\mu_3-\mu_4)^2(\lambda-\mu_3)}{\mu_3(\lambda-\mu_4)}.
\ela
Hence, the decomposition \eqref{A26}, \eqref{A27} of Pleba\'nski's first heavenly equation \eqref{A19} constitutes an incarnation of the decomposition of the general heavenly equation into the dispersionless Hirota system \eqref{A20}. \high{It is noted that \eqref{A28} may also be formulated as 
\bela{A28a}
  -(\sigma\highb{\bar{\lambda}})^{-1} = 2\frac{(\mu_3-\mu_4)^2(\lambda-\mu_4)}{\mu_4(\lambda-\mu_3)}
\ela
which highlights the symmetry of the pair \eqref{A26}.}

\section{Partial Legendre transformations. The TED equation}

In Section 6, it has been demonstrated that the general heavenly equation is invariant under a Legendre transformation. The analysis in the previous section naturally leads to the consideration of partial Legendre transformations applied to Pleba\'nski's first heavenly equation and, by extension, to the Husain-Park equation.

\subsection{A partial Legendre transform of the first heavenly equation}

The constraint \eqref{A24} suggests considering the partial Legendre transformation
\bela{A29}
 (\Theta; y^1,y^2,y^3,y^4)\quad \rightarrow\quad (\bar{\Theta};y_1,y^2,y_3,y^4),
\ela
where
\bela{A30}
 \bar{\Theta} = \Theta - y^1\Theta_{y^1} - y^3\Theta_{y^3},\quad y_i = \Theta_{y^i},
\ela
so that it is natural to examine the first heavenly equation \eqref{A19} in the form
\bela{A31}
  dy_3\wedge dy_4\wedge dy^3\wedge dy^4 = \sigma dy^1\wedge dy^2\wedge dy^3\wedge dy^4.
\ela
In terms of the new potential $\bar{\Theta}$, the definition of the variables $y_i$ formulated as
\bela{A32}
 d\Theta = y_idy^i
\ela 
implies that
\bela{A33}
 y^1 = -\bar{\Theta}_{y_1},\quad y_2 = \bar{\Theta}_{y^2},\quad y^3 = -\bar{\Theta}_{y_3},\quad y_4 = \bar{\Theta}_{y^4},
\ela
leading to the partial Legendre transform
\bela{A34}
 \bar{\Theta}_{y_1y_3}\bar{\Theta}_{y^2y^4} - \bar{\Theta}_{y^2y_3}\bar{\Theta}_{y_1y^4} = \sigma\big(\bar{\Theta}_{y_1y_1}\bar{\Theta}_{y_3y_3} - \bar{\Theta}_{y_1y_3}^2\big)
\ela
of the first heavenly equation. This form of the self-dual Einstein equations together with its Legendre-type connection with Pleba\'nski's first heavenly equation has been recorded in \cite{MalykhNutkuSheftel2003}. In the current context, the constraint \eqref{A24} shows that the counterpart of the decomposition of the general heavenly equation into the dispersionless Hirota system is the decomposition of the heavenly equation \eqref{A34} into a system analogous to the system \eqref{A26}, \eqref{A27} which is generated by matching the eigenfunction and \high{the potential $\bar{\Theta}$}, that is,
\bela{A35}
 \Phi = \bar{\Theta}
\ela
as in the case of the general heavenly equation.

\subsection{Connection with the 4+4-dimensional TED equation}

It turns out that the above observation is not a coincidence. As indicated in the preceding, the TED equation constitutes a 4+4-dimensional integrable generalisation of the general heavenly equation and, in fact, exists in $2n+2n$ dimensions~\cite{KonopelchenkoSchief2019}. The TED equation
\bela{A36}
 \begin{split}
  & \phantom{+}\,\,\, (\Theta_{y^1z^2} - \Theta_{y^2z^1})(\Theta_{y^3z^4} - \Theta_{y^4z^3})\\  
  & + (\Theta_{y^2z^3} - \Theta_{y^3z^2})(\Theta_{y^1z^4} - \Theta_{y^4z^1})\\
  & + (\Theta_{y^3z^1} - \Theta_{y^1z^3})(\Theta_{y^2z^4} - \Theta_{y^4z^2}) = 0
 \end{split}
\ela
is multi-dimensionally consistent \cite{KonopelchenkoSchief2019} and encodes many (if not all) known heavenly equations and also, for instance, the six-dimensional second heavenly equation (see, e.g.,  \cite{DoubrovFerapontov2010} and references therein). In particular, the travelling wave reduction
\bela{A37}
  \Theta_{z^i} = \lambda_i\Theta_{y^i}
\ela
leads to the general heavenly equation
\bela{A38}
 \begin{split}
  & \phantom{+}\,\,\, (\lambda_1-\lambda_2)(\lambda_3-\lambda_4)\Theta_{y^1y^2}\Theta_{y^3y^4}\\
  & +  (\lambda_2-\lambda_3)(\lambda_1-\lambda_4)\Theta_{y^2y^3}\Theta_{y^1y^4}\\  
  & +  (\lambda_3-\lambda_1)(\lambda_2-\lambda_4)\Theta_{y^3y^1}\Theta_{y^2y^4} = 0.
\end{split}
\ela
Moreover, it has been shown that any eigenfunction $\Phi$ obeying the associated Lax pair constitutes a symmetry of the TED equation so that matching this eigenfunction symmetry with the scaling symmetry of the TED equation encapsulated in the constraint $\Phi=\Theta$ leads to a higher-dimensional integrable extension of the dispersionless Hirota system consisting of four compatible 3+3-dimensional generalised dispersionless Hirota equations, namely
\bela{A39}
 \begin{split}
  & \phantom{+}\,\,\, (\Theta_{y^iz^k} - \Theta_{y^kz^i})(\Theta_{z^l} - \lambda\Theta_{y^l})\\  
  & + (\Theta_{y^kz^l} - \Theta_{y^lz^k})(\Theta_{z^i} - \lambda\Theta_{y^i})\\
  & + (\Theta_{y^lz^i} - \Theta_{y^iz^l})(\Theta_{z^k} - \lambda\Theta_{y^k}) = 0.
 \end{split}
\ela
Here, the indices $i,k,l\in\{1,2,3,4\}$ are distinct. Indeed, in the travelling wave reduction \eqref{A37}, the fully symmetric avatar of the dispersionless Hirota system is obtained. One may also directly verify that the TED equation \eqref{A36} is an algebraic consequence of the generalised dispersionless Hirota system \eqref{A39}.

Another travelling wave reduction of the TED equation generated by
\bela{A40}
 \begin{aligned}
 \Theta_{z^1} &= \lambda_1\Theta_{y^1},\quad &\Theta_{z^2} = &\lambda_1\Theta_{y^2} + \nu_2\Theta_{y^1}\\
 \Theta_{z^3} &= \lambda_3\Theta_{y^3},\quad &\Theta_{z^4} = &\lambda_3\Theta_{y^4} + \nu_4\Theta_{y^3}
 \end{aligned}
\ela
reads
\bela{A41}
  (\lambda_3-\lambda_1)^2(\Theta_{y^1y^3}\Theta_{y^2y^4} - \Theta_{y^2y^3}\Theta_{y^1y^4}) = \nu_2\nu_4(\Theta_{y^1y^1}\Theta_{y^3y^3} - \Theta_{y^1y^3}^2)
\ela
which is exactly of the form \eqref{A34} \highb{(with a slightly different labelling of the dependent and independent variables)}. Its decomposition into a system of compatible equations via the symmetry constraint $\Phi=\Theta$ is obtained by imposing the travelling wave constraints \eqref{A40} on the generalised dispersionless Hirota system \eqref{A39}. This explains why the heavenly equation \eqref{A34} admits the \high{(symmetry)} constraint \eqref{A35}. It is important to note that the travelling wave constraints \eqref{A40} constitute an extension of the travelling wave constraints \eqref{A37} subject to the choice $\lambda_1=\lambda_2$ and $\lambda_3=\lambda_4$. \high{The latter is precisely the specialisation employed in Section 8 which has led to Pleba\'nski's first heavenly equation}. In fact, this suggests that one should also consider the ``intermediate'' case given by
\bela{A42}
 \begin{aligned}
 \Theta_{z^1} &= \lambda_1\Theta_{y^1},\quad &\Theta_{z^2} = &\lambda_1\Theta_{y^2} + \nu_2\Theta_{y^1}\\
 \Theta_{z^3} &= \lambda_3\Theta_{y^3},\quad &\Theta_{z^4} = &\lambda_4\Theta_{y^4}
 \end{aligned}
\ela
and corresponding to the choice $\lambda_1=\lambda_2$ in the above-mentioned sense, leading to the reduction
\bela{A43}
 \bear{c}
  \Theta_{y^1y^3}\Theta_{y^2y^4} - \Theta_{y^2y^3}\Theta_{y^1y^4} = \tilde{\sigma}(\Theta_{y^1y^1}\Theta_{y^3y^4} - \Theta_{y^1y^3}\Theta_{y^1y^4})\as\dis \tilde{\sigma} = \frac{\nu_2(\lambda_4-\lambda_3)}{(\lambda_1 - \lambda_3)(\lambda_1 - \lambda_4)}
 \ear
\ela
of the TED equation. Indeed, application of the partial Legendre transformation
\bela{A44}
 (\Theta; y^1,y^2,y^3,y^4)\quad \rightarrow\quad (\tilde{\Theta};y_1,y^2,y^3,y^4),
\ela
with
\bela{A45}
 \tilde{\Theta} = \Theta - y^1\Theta_{y^1},\quad d\Theta = y_idy^i 
\ela
is readily shown to produce the Husain-Park equation
\bela{A46}
  \tilde{\sigma}\tilde{\Theta}_{y^3y^4} = \tilde{\Theta}_{y_1y^3}\tilde{\Theta}_{y^2y^4} - \tilde{\Theta}_{y^2y^3}\tilde{\Theta}_{y_1y^4}.
\ela
Hence, we conclude that the three sets of travelling wave constraints \eqref{A37}, \eqref{A40} and \eqref{A42} of the TED equation correspond to the three canonical choices of the parameters $\lambda_i$ in the current formalism with the associated reductions \eqref{A38}, \eqref{A41} and \eqref{A43} being linked to the general heavenly equation, Husain-Park equation and first heavenly equation by the respective (partial) Legendre transformation. 

In summary, a link between travelling wave reductions of the TED equation and the classification within the present formalism with respect to the number of pairs of coinciding spectral parameters has been established. This highlights, once again, the significance of the TED equation. In this connection, it is interesting to recall \cite{KonopelchenkoSchief2019} that, as mentioned at the beginning of this section, the Husain-Park and first heavenly equations may also be obtained directly from the TED equation by imposing appropriate constraints.

\subsubsection*{Acknowledgements}
W.K.S.\ wishes to express his gratitude to his colleague John Steele for sharing his expertise in the area of conformal Killing symmetries.

\end{document}